\theoremstyle{plain}
\newtheorem{theorem}{Theorem}
\newtheorem{lemma}{Lemma}
\theoremstyle{definition}
\newcommand{\corr}{\color{black}}
\begin{document}

\title{Towards a Queueing-Based Framework for In-Network Function Computation}

\author{Siddhartha Banerjee\thanks{Siddhartha Banerjee is with the Department of ECE, the University of Texas at Austin. Email: sbanerjee@mail.utexas.edu}, Piyush Gupta\thanks{Piyush Gupta is with the Mathematics of Networks and Communications group, Bell Labs, Alcatel-Lucent. Email: pgupta@research.bell-labs.com} and Sanjay Shakkottai\thanks{Sanjay Shakkottai is with the Department of ECE, The University of Texas at Austin. Email: shakkott@mail.utexas.edu}}


\maketitle

\begin{abstract}
We seek to develop network algorithms for function computation in sensor networks. Specifically, we want dynamic joint aggregation, routing, and scheduling algorithms that have analytically provable performance benefits due to in-network computation as compared to simple data forwarding. To this end, we define a class of functions, the Fully-Multiplexible functions, which includes several functions such as parity, MAX, and $k^{th}$-order statistics. For such functions we characterize the maximum achievable refresh rate of the network in terms of an underlying graph primitive, the min-mincut. In {\corr acyclic} wireline networks, we show that the maximum refresh rate is achievable by a simple algorithm that is dynamic, distributed, and only dependent on local information. In the case of wireless networks, we provide a MaxWeight-like algorithm with dynamic flow splitting, which is shown to be throughput-optimal. 
\keywords{in-network function computation \and wireless sensor networks \and dynamic routing and scheduling algorithms}
\end{abstract}

\section{Introduction}
\label{intro}

In-network function computation is one of the fundamental paradigms that increases the efficiency of sensor networks v\'{i}s-a-v\'{i}s conventional data networks. Sensor nodes, in addition to sensing and communication capabilities, are often equipped with basic computational capabilities. Depending on the task for which they are deployed, a sensor network can be viewed as a distributed platform for collecting and computing a specific function of the sensor data. For example, a sensor network for environment monitoring may only be concerned with keeping track of the average temperature and humidity in a region.
Similarly `alarm' networks, such as those for detecting forest fires, require only the maximum temperature. The baseline approach for performing such tasks is to aggregate all the data at a central node and then perform offline computations; the premise of in-network computation is that distributed computation schemes can provide sizable improvement in the performance of the network. However, from the perspective of designing network algorithms, in-network function computation poses a greater challenge than data networks as the freedom to combine and compress packets, as long as the desired information is preserved, destroys the flow conservation laws central to data networks. The network has a lot more flexibility, so much so as to make quantifying its performance much more challenging \cite{GiridharKumar}.

Our focus in this paper is to develop a queue-based framework for such systems, and use it to design and analyze network algorithms. By network algorithms, we refer to cross layer algorithms that jointly perform the following tasks:
\begin{enumerate}
\item {\it Aggregating} the data at nodes via in-network computation,
\item {\it Routing} packets between nodes, and
\item {\it Scheduling} links between nodes for packet transmission.
\end{enumerate}
Cross-layer algorithms for data networks, although very successful in both theory and increasingly in real-system implementation, are concerned only with the scheduling and routing aspects. Hence, there is a need for a new framework and new algorithms for in-network function computation in sensor networks. Keeping in mind the lessons learnt from the success of data networks, our aim is to design network algorithms that are {\it dynamic} (i.e., the algorithm should not be designed assuming static network parameters, but rather, use the network state to adaptively learn the network parameters), {\it robust} (i.e., the algorithm adapts to temporal changes in traffic and network topology), {\it capable of dealing with a large class of functions} (i.e., if the function being computed by the network changes, then one should only need to make minor changes to the scheduling and routing algorithms), and {\it generalizable to all network topologies}.

Due to the wide range of potential applications, there are many existing models for such networks, and many different perspectives from which they are analyzed. Some representative works in this regard are as follows:
\begin{itemize}
\item[$\bullet$] The pioneering work of Giridhar and Kumar \cite{GiridharKumar} considers the function computation problem from the point of view of the capacity scaling framework of Gupta and Kumar \cite{GuptaKumar}. In particular, they quantify scaling bounds for certain classes of functions (symmetric, divisible, type-sensitive, and type-threshold) under the protocol model of wireless communications and for collocated graphs and random geometric graphs.
\item[$\bullet$] Other papers consider the function computation problem from the point of view of information theory \cite{OrlitskyRoche, MaIshwarGupta} and communication complexity \cite{Kowshik2, Moscibroda}, characterizing various metrics (refresh rate, number of messages, etc.) for different functions in terms of certain properties of the graph, the function to be computed, and the underlying data correlation. All the above works take an idealized `bottom-up' approach to determine the fundamental limits of the problem, and hence are not directly suitable for designing practical network algorithms. 
\item[$\bullet$] Similar in spirit to the above papers, another approach is to study function computation from the perspective of source coding\cite{Appuswamy, Karamchandani}. These works characterize bounds and show the existence of coding schemes for noiseless, wireline networks. As with the previous algorithms, these policies tend to be idealized, using more complex coding-based schemes instead of simple routing and aggregation (we will later show that such simple strategies suffice for optimal in-network computation of a number of functions of interest); further, these papers do not have explicitly defined policies but rather existence results for such policies. 
\item[$\bullet$] In contrast to the `bottom-up' approach of all the above works, Krishnamachari et al. \cite{KrishEstrinWicker} adopt a more `top-down' approach whereby they formulate network models that abstract out some of the complexity while allowing quantification of performance gains (in their case, energy and delay). Their models do not, however, achieve the optimal throughput and also do not allow for the design of dynamic network algorithms.  
\item[$\bullet$] An alternate model of sensor networks is to assume that nodes are capable of in-network {\it compression}, wherein only the compression (and not merging) of flows is permitted. For example, Baek et al. \cite{Baek} consider routing algorithms for power savings in hierarchical sensor networks. Similarly, Sharma et al. \cite{Neelysensor} design energy-efficient queue-based algorithms under the assumption that the only operation allowed in addition to routing and scheduling is compression of packets at the source node.
\end{itemize}
\vspace{-0.3cm}

The queue-based model for data networks has proved to be an essential tool in designing provably-efficient algorithms for such systems. This model has provided a common framework for understanding various aspects of data network performance such as throughput \cite{TassEph, akrsvw}, delay \cite{BuiSrikantStol, Neelydelay}, flow utility maximization \cite{LinShroffSrikant, BuiSrikantStolyar2, Srikantsensor}, network utility maximization \cite{spbp, StolyarGPD}, distributed algorithms \cite{CSMA}, among others (for an overview, refer to \cite{Neelybook}). In addition, these algorithms have been implemented in real systems  \cite{Akyol1, Ryu}, including in sensor networks \cite{Sridharan1}, with good results. However, these algorithms are designed for data networks, and can not exploit any potential benefit from in-network computation. More recently, this framework has been extended to fork-and-join networks with fixed routing\cite{Towsley}, and resource allocation in processing networks\cite{JiaWal}. 

Using fixed routing in a network usually leads to suboptimal operations as the routes may not be designed to optimize the network performance; in general, even choosing the single best fixed route can perform arbitrarily worse than with dynamic routing (see example in Section~\ref{sec:refrate}). Further, static routing is not robust to temporal changes in the network. However, introducing dynamic routing with in-network computation destroys the flow conservation equations that exist in data networks and networks with fixed flows, as the flow out of a node depends both on inflow as well as (dynamic) packet aggregation at that node. Thus, there is a need to come up with a new queue-based framework and algorithms for efficient function computation in sensor networks, and our paper is a step in this direction.

\subsection{Main Contributions}
\label{contributions}
Our main contributions in this paper are as follows:
\begin{itemize}
\item[$\bullet$] We identify a class of functions, the Fully-Multiplexible or FMux functions, for which we provide a tight characterization of the maximum refresh rate with in-network computation, i.e., the maximum rate at which the sensors can generate data such that the computation can be performed in a stable\footnote{By stability, we refer to the standard notion of the existence of a stationary regime for the queueing process \cite{akrsvw, Neelybook}.} manner. More formally, we show that for these functions, if the refresh rate exceeds a certain graph parameter (the stochastic min-min-cut, which we define formally in Section \ref{sec:refrate}), then the system is transient under \textit{any} algorithm, whereas for any rate lower than this parameter, we construct a policy that can stabilize the system.

\item[$\bullet$] Leveraging the results of Massouli{\'e} et al. \cite{MassTwigg} on {\it  broadcasting}, we obtain a wireline routing algorithm for {\it  aggregation} via in-network computation of FMux functions {\corr in directed acyclic graphs}. Our approach is based on the observation that broadcasting and aggregation are in some sense, duals, of each other. More technically, the duality occurs between 'isolation' of packets in aggregation (i.e., a packet does not have neighboring packets to aggregate with) and that of multiple receptions of the same packet (from different neighbors) in broadcasting. By suitably modifying the approach in \cite{MassTwigg}, we are able to develop an in-network aggregation algorithm for which routing is completely decentralized, and simple random packet forwarding and aggregation suffices for throughput-optimality.

\item[$\bullet$] For general wireless networks we develop dynamic algorithms based on a centralized allocation of routes (dynamic flow splitting) and MaxWeight-type scheduling. In particular, we show that loading rounds on trees in a greedy manner (whereby an incoming round is loaded on the least weighted aggregation tree), coupled with an appropriate scheduling rule, is throughput-optimal for computing FMux functions. The analysis of this algorithm is unique in that in addition to an appropriate Lyapunov function, it requires the construction of appropriate tree-packings of the network graph in order to show the throughput-optimality of this routing scheme.

\end{itemize}

Notation: Throughout the paper, we use calligraphic fonts ($\mathcal{Q},\mathcal{A}$, etc.) to denote sets and the corresponding capital letter ($Q,A$, etc.) to denote their cardinality. We interchangeably use $\cup$ or $+$ for adding elements to sets, and $-$ for deleting elements from sets, and sometimes for brevity of exposition, use the element $u$ to denote the singleton set $\{u\}$ when the meaning is clear from the context (in particular, for a set $S$ and element $u$, $S+u\triangleq S\cup\{u\}$). We also use the shorthand notation $[N]\triangleq\{1,2,\ldots,N\}$.

\section{System Model and Function Classes}
\label{sec:model}

In this section we describe the system model we study in the rest of the paper. At a high level, the system consists of a network of $N$ nodes, one of which is the data aggregator and the rest are sensors. Sensor nodes are capable of three tasks: sensing the environment, transmitting to and receiving data from other nodes, and performing computations on the data. The sensors are assumed to \textit{sense the environment in a synchronous manner}, and the overall purpose of the system is to compute a specific function of the synchronously generated sensor data and forward it to the aggregator. Further, the function computation is assumed to be done in a repeated manner, and the metric used to quantify the efficacy of an algorithm is the maximum synchronous rate at which the sensors can generate data such that the required function of the data can be forwarded to the aggregator in a stable manner. This rate is henceforth referred to as the \textit{maximum refresh rate} of the network.

Before we describe the queueing framework for function computation, we first outline the general communication model that we consider in this work. This model is the same as that considered for studying data networks \cite{akrsvw}. In the next section, we will outline the modifications we make in order to capture the in-network computation aspect of a sensor network.

\noindent\textbf{Communication Graph:} 
We model the topology of the sensor network as a directed graph $G(\mathcal{N,L})$, where $\mathcal{N}$ is a set of $N$ nodes, and $\mathcal{L}$ is a set of $L$ directed links which determine the connectivity between nodes. There is a special node, $a \in \mathcal{N}$, referred to as the {\it  aggregator}, and the rest of the nodes in $\mathcal{N}$ are sensor nodes. Directed link $(u,v)\in \mathcal{L}$ represents that there exists a communication channel from node $u$ to node $v$ (in wireline this corresponds to a physical channel, while in wireless it represents the fact that the nodes are in radio range).
%
%

\noindent\textbf{Transmission Model:} 
Following the convention in literature \cite{akrsvw,MassTwigg}, we consider a continuous time system in case of wireline systems, whereas in the case of wireless networks, we assume that time is slotted
, and state all rates in bits per slot. In wireline networks, we define a vector of link rates $\mathbf{\hat{c}}=\left\{c_{uv}\right\}_{(u,v)\in \mathcal{L}}$; one bit is assumed to traverse a link $(u,v)\in\mathcal{L}$ with a random transit time with distribution \textit{Exponential}$(c_{uv})$. The transit times are independent across links and across packets crossing the same link.

For wireless networks, we make the following assumptions/definitions \cite{spbp}:
\begin{itemize}
	\item[$\bullet$] We assume that the channels between nodes are constant (but can extend to time varying channels with added notation, see \cite{akrsvw}). The wireless nature of the network is reflected in the interference constraints.
	\item[$\bullet$] For transmission schedule $I\in 2^\mathcal{L}$, $\mathbf{c}(I)=\left\{c_{uv}(I)\right\}$ denotes the link-rate vector of transmission rates over the links under the chosen schedule.  
	\item[$\bullet$] $\mathcal{I}\subseteq 2^{\mathcal{L}}$ is the set of valid schedules that obey the interference constraints (henceforth referred to as independent sets). $\mathbf{c}(I)$ is said to be \textit{admissible} if the link-rates can be achieved simultaneously in a time slot. $\Gamma =\{\mathbf{c}(I):I\in\mathcal{I}\}$ is the set of all admissible $\mathbf{c}(I)$ and is assumed to be time invariant as stated above. Further, we assume that $c_{uv}(I)\leq c_{\max}\,\,\,\forall\,\,\,(u,v)\in\mathcal{L}, I\in\mathcal{I}$.
	\item[$\bullet$] $\mathbf{\hat{c}}$ is said to be \textit{obtainable} if $\mathbf{\hat{c}}\in\mathcal{CH}(\Gamma)$, the convex hull of $\Gamma$. An obtainable link-rate vector can be achieved by time sharing over admissible link-rate vectors. 
	\item[$\bullet$] From the definition of the convex hull, we have that for every obtainable rate vector $\mathbf{\hat{c}}\in\mathcal{CH}(\Gamma)$, there exists a probability measure $\pi\in\mathbb{R}^{|\mathcal{I}|}_+$ over $\mathcal{I}$ such that $\mathbf{\hat{c}}\leq\{\sum_{I\in\mathcal{\hat{I}}}\pi(I)c_{uv}(I), (u,v)\in \mathcal{L}\}$. The vectors $\pi$ are called Static Service Split (or SSS) rules\cite{akrsvw}, and represent time sharing fractions between different independent sets in order to achieve the rate $\mathbf{\hat{c}}$.
\end{itemize}

Up to this point, the system is identical to one considered for data networks. To highlight the unique features of a physical sensor network performing function computation (and how they affect the modeling of such a system), we consider the following example. In the process, we also indicate the gains achievable via in-network computation versus data download and processing at the aggregator.

\begin{center}
  \begin{tabular}{p{114mm}}
    \hline
    \hline
 	\\
  \end{tabular}
\end{center}
\vspace{-0.3cm}

\textit{Example $1$:} Consider a grid of $N$ temperature sensors, with a single aggregator at the center, engaged in recording the maximum temperature over these sensor readings. Each node is connected to its four immediate neighbors in the grid via links with a fixed capacity $c$. Every node senses the temperature \textit{synchronously}, and the aggregator desires the MAX of these synchronous measurements. Suppose the network operates by transferring all the data to the aggregator, and then calculating the MAX offline; the maximum rate at which the measurements can be made is then $\Theta(\frac{1}{N})$, as all the packets must pass through one of the $4$ links entering the aggregator. On the other hand, if we allow in-network computation, wherein nodes on receiving multiple packets can discard all but the one with highest value, then the network can operate at a rate of $\Theta(1)$, as the bottleneck is now the minimum-cut of the graph (again the $4$ links entering the aggregator). In subsequent sections, we show that for certain functions like MAX, and any network, the maximum possible refresh rate can be related thus to minimum-cuts in the network. Further, there are dynamic algorithms that support rates up to the maximum refresh rate.

\begin{center}

  \begin{tabular}{p{114mm}}
    \hline
    \hline
 	\\
  \end{tabular}
\end{center}
\vspace{-0.3cm}
Keeping this example in mind, we now outline the rest of our system model.

\noindent\textbf{Traffic Model:} We consider a symmetric arrival rate model, where each sensor node senses the environment synchronously at a rate $\lambda$ (the refresh rate of the network). The aim of a network algorithm is to support the maximum possible $\lambda$ while ensuring that the network is stable.

In case of wireline networks, packets are generated synchronously at all nodes $i\in\mathcal{N}$ following a Poisson process with rate $\lambda$. In case of wireless networks, the arrival process $A_i[t]$ in time slot $t$ consists of a random number of packets per time slot generated in a synchronous manner , i.e., $A_i[t]=A_j[t]=A[t]\,\forall\,i,j\in\mathcal{N}$, and further $A[t]$ is i.i.d across time. In this case, we define the refresh rate as
\begin{equation*}
\lambda=\mathbb{E}[A[t]],
\end{equation*}  
and also assume that $A[t]$ has finite second moment\footnote{Note that this assumption is not the most general possible restriction on the input process, but one that we choose for convenience of exposition. For more general conditions on the arrival process, refer to \cite{akrsvw}.} which we denote as $m_A=\mathbb{E}[A[t]^2]$.

We associate all simultaneously generated packets with a unique identifier called the \textit{round} number, which represents the time when the packet was generated. In particular, we follow a scheme whereby we number the packets sequentially in ascending order of their generation times, and updating the round numbers when packets complete being aggregated (Thus the oldest unaggregated packet in the network always has round number $0$ and so forth). This scheme of round number allocation is henceforth referred to as the \textit{generation-time ordering}.

Now in order to develop a queueing model, we need a framework to capture the data aggregation operations. As mentioned before, our primary goal is to explore the benefits of in-network computation versus data-download. To this end, we restrict our attention to a specific class of functions, which we define as the FMux functions, and for which we can exactly quantify the gains from in-network computation. The intuition behind the FMux class is that these functions support \textit{maximum compression upon aggregation}; when two (or more) packets combine at a node, the resultant packet has the same size as the original packets. We now define it formally.

\noindent\textbf{Computation Model:}  We assume that the function $f$ is \textit{divisible}\cite{GiridharKumar}. Formally, we assume that each sensor records a value belonging to a finite set $\mathcal{X}$, and we have a function $f$ of the sensor values that needs to be computed at the aggregator $a$. We use $f_k$ to denote the function $f$ operating on $k$ inputs, i.e., $f_k:\mathcal{X}^k\rightarrow\mathcal{R}(f,k)$, where $\mathcal{R}(f,k)$ denotes the range of function when it takes $k$ inputs. Then the function $f$ is said to be {\it  divisible} if:
\begin{enumerate}
\item $|\mathcal{R}(f,k)|$ is non-decreasing in $k$, and
\item Given any partition $\Pi(S)=\{S_1,S_2,\ldots,S_j\}$ of $S\subseteq [n]$, there exists a function $g^{\Pi}(\cdot)$ such that for any $\underbar{x}\in\mathcal{X}^N$:
\begin{equation*}
f(\underbar{x}_S)=g^{\Pi}(f(\underbar{x}_{S_1}),f(\underbar{x}_{S_2}),\ldots,f(\underbar{x}_{S_j})).
\end{equation*}
\end{enumerate} 
Intuitively, for any partition of the nodes, $f$ can be computed by performing a local computation over each set in the partition, and then aggregating them.

We define a function $f$ to be \textbf{\textit{Fully-Multiplexible or FMux}} if $\mathcal{R}(f,k)=\mathcal{R}(f,j)=\mathcal{R}(f)$ for all $j,k\in [n]$. In other words, the output of a FMux function lies in the same set independent of the number of inputs. Some important examples of FMux functions are MAX, k-th order statistics, parity, etc.. As mentioned before, in this work we will focus on FMux functions as they most clearly exhibit the effects of in-network computation (in that we have tight bounds for their refresh rate). 

As a representative example of FMux functions for defining the queueing model and algorithms, consider computation of the parity of the sensor readings: $\mathcal{X}=\{0,1\}, f(\{x_1,x_2,\ldots,x_N\})=x_1\oplus x_2\oplus\ldots\oplus x_N$, where $\oplus$ represents the binary XOR operator. Upon sensing, node $i$ stores the value $x_i$ as a packet of size $\log |\mathcal{X}|=1$ bit. Next, when two or more packets of the same round arrive at a node, they are combined using the XOR operation. Finally, the aggregator obtains the parity by taking XOR of all the packets of a given round that it receives. We now develop a queueing model for FMux functions. 

\noindent\textbf{Queueing Model:} Each node maintains a queue of packets corresponding to different rounds. For node $i,\mathcal{Q}_i[t]\in 2^{\mathbb{N}_0}$ is a subset of $\mathbb{N}_0$ representing the round numbers of all packets queued up at that node. We also define $Q_i[t]=|\mathcal{Q}_i[t]|$.

When a packet corresponding to round $r$ arrives at node $i$ from any other neighboring node, it is combined with node $i$'s own packet corresponding to round $r$ to result in a single packet of the same size (using the FMux property in general, e.g. by taking XOR for parity). In the case where node $i$ does not have a packet of round $r$ in queue, it needs to store the new packet. Formally, upon arrival of packet of round $r$ in time slot $t$ (and ignoring other arrivals and departures), the queue is updated as follows-
\begin{equation*}
\mathcal{Q}_i[t+1]=\mathcal{Q}_i[t]\bigcup r\mathds{1}_{\{r\notin \mathcal{Q}_i[t]\}}, 
\end{equation*}
where we use $r\mathds{1}_{\{r\notin Q_i[t]\}}$ as shorthand for `$r$ if $r\notin \mathcal{Q}_i[t]$, else $\phi$'. The complete queue update in a time slot is obtained by extending this definition for all arrivals, and by removing any departing packets from the queue.


Suppose further that the round number allocation is done according to the generation-time ordering scheme described before, then the system described above forms a Markov chain under any stationary scheduling and routing algorithm. Further, it can be showed that this chain is irreversible and aperiodic. We will now focus on the above queueing dynamics for the design of scheduling and routing algorithms. 

We should note here that the queueing model we have described above accounts only for routing and aggregation of packets \textit{belonging to the same round}. We have not allowed packets from different rounds to be combined together in any way, thereby negating the possibility of block coding and network coding. In the case of parity, it is known however that there is no improvement possible by using schemes with block/network coding \cite{GiridharKumar,Kowshik1}.

\section{Maximum Refresh Rate and Tree Packing}
\label{sec:refrate}

Given the above queueing model, it is unclear what routing structures are required for efficient in-network computation. Existing routing-based approaches for function computation \cite{KrishEstrinWicker,Towsley,TAG} often assume that routing is done on a single \textit{aggregation tree}, where each node aggregates data from its children before relaying it to its parent. However it's not \textit{a priori} evident that a single optimal tree, or a collection of optimal trees exists (or indeed that acyclic aggregation structures are sufficient), and if it does, how it can be found dynamically.

In this section, we derive an algorithm-independent upper bound on the refresh rate for FMux computation. By focusing on the flow of information from sensor nodes to the aggregator, we are able to express the bound in terms of an underlying graph primitive- the min-mincut of the graph. Next we construct a class of throughput-optimal randomized policies, thereby obtaining a tight characterization of the maximum refresh rate. In the process, we show the existence of an optimal collection of aggregation trees. To understand the import of this result, consider the following example.

\begin{center}
  \begin{tabular}{p{114mm}}
    \hline
    \hline
 	\\
  \end{tabular}
\end{center}
\vspace{-0.3cm}
\textit{Example $2$:} Consider a wired network $G$ consisting of the complete graph on $N$ nodes, with every edge having capacity $1$. If we use a single aggregation tree for routing, then the maximum possible refresh rate for computing the parity function is $1$, as every edge is a bottleneck. However, by using a collection of aggregation trees (in fact, it can be shown that a particular set of $N-1$ trees are sufficient), one can achieve a refresh rate of $N-1$. This, as we will show in the next section, is optimal as it turns out to match the min-mincut of the graph. 
\begin{center}
  \begin{tabular}{p{114mm}}
    \hline
    \hline
 	\\
  \end{tabular}
\end{center}
Keeping this in mind, we now characterize the maximum refresh rate for FMux computation in general graphs.

\subsection{An upper bound on refresh rate for FMux computation}
\label{uppercased}

We now state an upper bound for the refresh rate under which the network can be stabilized under any algorithm. We state this theorem for wireless networks, as an equivalent theorem for wireline networks can be obtained as a special case. 

Given a rate vector $\mathbf{\hat{c}}\in\mathcal{CH}(\Gamma)$ and any node $i\in\mathcal{N}$, we define the min-cut between the node $i$ and the aggregator $a$ as:
\begin{equation*}
\delta_{i}(\mathbf{\hat{c}})=\min_{\{S\subset \mathcal{N}:i\in S,a\notin S\}}\sum_{u\in S,v\notin S}\hat{c}_{uv}.
\end{equation*}
Further, we define the min-mincut of the network under rate vector $\mathbf{\hat{c}}\in\mathcal{CH}(\Gamma)$ as
\begin{equation*}
\delta^*(\mathbf{\hat{c}})=\min_{i\in\mathcal{N}}\delta_{i}(\mathbf{\hat{c}}).
\end{equation*}
Then we have the following lemma.

\begin{lemma}
\label{lem:minmincut}
\noindent\textbf{Upper Bound on refresh rate:} 
Consider a network performing FMux computation. A refresh rate of $\lambda$ can not be stabilized by any routing and scheduling algorithm if
\begin{equation*}
\lambda> (\log |\mathcal{R}(f)|)^{-1}\max_{\mathbf{\hat{c}}\in\mathcal{CH}(\Gamma)}\delta^*(\mathbf{\hat{c}}).
\end{equation*}
\end{lemma}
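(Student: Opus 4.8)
The plan is to prove the lemma by contradiction, via a cut-set argument that replaces the flow conservation of ordinary data networks by a ``conservation of information'' across a cut. Suppose, for contradiction, that some routing and scheduling algorithm stabilizes a refresh rate $\lambda$ in the forbidden range. First I would record the standard consequences of stability \cite{akrsvw}: positive recurrence gives a stationary regime in which throughput equals the arrival rate, so over a horizon $[0,T]$ the system generates $\approx\lambda T$ rounds and delivers the function value of all but $o(T)$ of them to the aggregator by time $T+o(T)$; moreover, if the algorithm schedules independent sets $I[1],\dots,I[T]$, the empirical rate vector $\bar{\mathbf c}^{(T)}=\frac1T\sum_{t\le T}\mathbf c(I[t])$ lies in $\mathcal{CH}(\Gamma)$. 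For this empirical vector let the pair $(i^\star,S^\star)$ attain $\delta^*(\bar{\mathbf c}^{(T)})$, i.e.\ $i^\star\in S^\star$, $a\notin S^\star$, and $\sum_{u\in S^\star,v\notin S^\star}\bar c^{(T)}_{uv}=\delta^*(\bar{\mathbf c}^{(T)})\le\max_{\hat{\mathbf c}\in\mathcal{CH}(\Gamma)}\delta^*(\hat{\mathbf c})$, the last step holding because $\bar{\mathbf c}^{(T)}$ is itself obtainable.

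The heart of the argument is a matching lower bound on the number of bits that must cross $S^\star$. Fix a round, and consider the bits of that round that are ever transmitted on a link $(u,v)$ with $u\in S^\star$, $v\notin S^\star$. Since aggregation only merges colocated packets of a round, the aggregator's eventual knowledge of the readings of the sensors in $S^\star$ is a function of this transcript alone. By divisibility, $f(\underbar{x})=g^{\Pi}\!\big(f(\underbar{x}_{S^\star}),f(\underbar{x}_{(S^\star)^c})\big)$ for the partition $\Pi$ of the sensor set into $S^\star$ and its complement; by the FMux property $f(\underbar{x}_{S^\star})$ ranges over all $|\mathcal R(f)|$ values as the sensors in $S^\star$ vary. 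I would then choose the readings of the sensors outside $S^\star$ adversarially so that $z\mapsto g^{\Pi}(z,\cdot)$ separates these $|\mathcal R(f)|$ values, and invoke a pigeonhole/Fano-type argument: if the transcript for that round carried fewer than $\log|\mathcal R(f)|$ bits, two distinct configurations inside $S^\star$ would produce the same transcript, forcing the aggregator to err on one of them. Summing over the $\approx\lambda T$ rounds gives that at least $\lambda T\log|\mathcal R(f)|-o(T)$ bits cross $S^\star$ in $[0,T]$.

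Combining the two bounds finishes the proof. On one hand, the number of bits crossing $S^\star$ in $[0,T]$ is at most $\sum_{u\in S^\star,v\notin S^\star}(\text{bits sent on }(u,v))\le T\sum_{u\in S^\star,v\notin S^\star}\bar c^{(T)}_{uv}=T\,\delta^*(\bar{\mathbf c}^{(T)})\le T\max_{\hat{\mathbf c}\in\mathcal{CH}(\Gamma)}\delta^*(\hat{\mathbf c})$; on the other hand it is at least $\lambda T\log|\mathcal R(f)|-o(T)$. Dividing by $T$ and letting $T\to\infty$ yields $\lambda\log|\mathcal R(f)|\le\max_{\hat{\mathbf c}\in\mathcal{CH}(\Gamma)}\delta^*(\hat{\mathbf c})$, contradicting $\lambda>(\log|\mathcal R(f)|)^{-1}\max_{\hat{\mathbf c}\in\mathcal{CH}(\Gamma)}\delta^*(\hat{\mathbf c})$. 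The wireline statement follows as the special case in which $\Gamma$ degenerates to a single fixed-capacity vector, using the analogous continuous-time counting argument.

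I expect the principal obstacle to be making the per-round information lower bound fully rigorous: one must formalize ``bits crossing a cut'' and ``the aggregator computes $f$'' in a way that is robust to the packet-merging that destroys ordinary flow conservation (which is exactly what makes in-network computation delicate), and one must justify the adversarial choice of outside-cut readings that forces the aggregator to genuinely need the full $\log|\mathcal R(f)|$ bits — this is precisely where the FMux hypothesis (range independent of the number of inputs) and divisibility are essential, and where a naive data-network cut argument would fail. The remaining points — the $o(T)$ boundary terms, and the $T$-dependence of the bottleneck pair $(i^\star,S^\star)$ — are routine, the latter because there are only finitely many cuts.
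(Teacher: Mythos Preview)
Your argument is sound in outline but takes a genuinely different route from the paper. The paper argues operationally rather than information-theoretically: in the queueing model of Section~\ref{sec:model}, every packet is an atomic object of size exactly $\log|\mathcal R(f)|$ bits (this is the content of the FMux hypothesis), and aggregation replaces colocated packets of a round by a single packet of the \emph{same} size. Given a stabilizing algorithm with stationary long-run edge rates $\tilde{\mathbf c}\in\mathcal{CH}(\Gamma)$, the paper fixes a sensor $i$ and simply \emph{traces} the packet born at $i$ in each round through the network---whenever that packet is aggregated, the trace follows the merged packet---until it reaches $a$. This yields a family of $i\!\to\! a$ path-flows whose total, in packets, is $\lambda$ (packets never split or shrink), hence $\lambda\log|\mathcal R(f)|$ in bits; max-flow--min-cut then gives $\lambda\log|\mathcal R(f)|\le\delta_i(\tilde{\mathbf c})$ for every $i$, and maximizing over $\tilde{\mathbf c}$ finishes.

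The comparison is instructive. Your cut-set/transcript argument is the right tool if one wants a bound that survives block or network coding across rounds---and the paper explicitly notes that such schemes are outside its model. But for that generality you pay exactly the price you flagged as the principal obstacle: you must exhibit, for an arbitrary divisible FMux $f$, a choice of outside-cut readings making $z\mapsto g^{\Pi}(z,\cdot)$ injective on $\mathcal R(f)$, and equality of ranges alone does not obviously guarantee this (it does for MAX and parity, but a general argument needs care). The paper's packet-tracing sidesteps the issue entirely: because packets never drop below $\log|\mathcal R(f)|$ bits in its model, a per-source flow conservation---for the traced packet, not for node queues---survives aggregation, and no adversarial or Fano-type step is needed.
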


We note here that the capacities of the links are given in bits per time slot, while the refresh rate $\lambda$ is in terms of packets per time slot. The $(\log |\mathcal{R}(f)|)^{-1}$ factor is to convert link capacities into packets per time slot, and is henceforth present in all bounds for the refresh rate.

\begin{proof}
The proof follows from tracing the steady state flow of packets from any sensor node to the aggregator. More specifically, for a refresh rate $\lambda$, suppose the network is stabilized by some algorithm. Then the Markov Chain described by the packets in the network (under the generation time ordering round number allocation, as described above) has a stationary regime. Further, due to the network constraints, the average service rate on each edge of the network in the stationary regime is given by some $\mathbf{\tilde{c}}\in\mathcal{CH}(\Gamma)$ (in bits per slot)

Next under the stationary regime, for a sensor node $i\in\mathcal{N}$, we can trace the packets as they travel from node $i$ to the aggregator (in order to do so, we start tracing a packet when generated at $i$, and subsequently whenever that packet is aggregated, we trace the aggregated packet). Now for every directed path from $i$ to $a$, we obtain an average flow of packets which travel along that path. This gives us a set of flows from $i$ to $a$. Due to the unchanging packet size (due to the FMux assumption), the sum of these flows is equal to $\lambda$. However, due to the network constraints, the sum of flows on an edge $(u,v)$ is less than or equal to $\tilde{c}_{uv}$, and thus by the max-flow-min-cut theorem, $\lambda$ is less than or equal to the minimum $i-a$ cut with edge capacities given by $\mathbf{\tilde{c}}$. Now since this is true for any node $i$, we have that $\lambda\leq(\log |\mathcal{R}(f)|)^{-1}\delta^*(\mathbf{\tilde{c}})$. Maximizing over all $\mathbf{\tilde{c}}\in\mathcal{CH}(\Gamma)$, we get our result by contradiction.  
\end{proof}

\subsection{An optimal class of randomized scheduling/routing policies}
\label{treepackbnd}

From Lemma \ref{lem:minmincut}, it is evident that the min-mincut of the graph (under an appropriate SSS rule) is the bottleneck for computing an FMux function. Now we can use a classical theorem of Edmonds in order to simplify the space of policies we need to consider. We state the theorem in its original form whereby it is applicable to a one-to-all network broadcast scenario (informally, a directed graph with a special source node, where the aim is to transmit the same packets from the source to all the nodes in the network). However given a sensor network, we can apply Edmonds' Theorem on it by reversing the directions of all the edges while keeping their capacities the same. 

Consider a directed graph $G(\mathcal{N},\mathcal{L})$ with a distinguished source node $s\in \mathcal{N}$, and suppose each edge $(u,v)\in \mathcal{L}$ of the graph is associated with a capacity $c_{uv}>0$. As before, the min-mincut of the graph $G$ is defined as:
\begin{equation*}
\delta^*(G)=\min_{i\in \mathcal{N}\backslash\{s\}}\min_{S\subset \mathcal{N}:i\in S,s\notin S}\sum_{u\notin S,v\in S}\hat{c}_{uv}.
\end{equation*}
Let $\mathcal{T}$ to be a set of all spanning trees of $G$ rooted at $s$ (i.e., every $t\in\mathcal{T}$ is a spanning tree with $s$ as the first element in its topological order). The max-spanning-tree-packing number, $\Lambda^*(G)$ is defined to be the solution to the following optimization problem:\\
Maximize \hspace{1.5cm}$\sum_{\tau\in\mathcal{T}}\lambda_\tau$,\\
subject to
\begin{align*}
\sum_{\tau\in\mathcal{T}:(u,v)\in \tau}\lambda_\tau &\leq c_{uv}\,&\forall\,(i,j)\in \mathcal{L},\\
\lambda_\tau &\geq 0\,&\forall\,\tau\in\mathcal{T}.
\end{align*}
Then we have the following theorem.
\begin{theorem}
\label{thm:edmonds}
\textbf{(Edmonds, $\mathbf{1972}$ \cite{edmonds})} For a directed graph $G(\mathcal{N},\mathcal{L})$ with distinguished source vertex $s$ and edge capacities $c_{uv}, (u,v)\in \mathcal{L}$, the min-mincut $\delta^*(G)$ is equal to the max-spanning-tree-packing $\Lambda^*(G)$.
\end{theorem}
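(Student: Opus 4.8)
The plan is to prove the two inequalities $\Lambda^*(G)\le\delta^*(G)$ and $\Lambda^*(G)\ge\delta^*(G)$ separately; the first is an easy weak-duality observation, and the second carries essentially all of the content. For $\Lambda^*(G)\le\delta^*(G)$: fix any feasible packing $\{\lambda_\tau\}_{\tau\in\mathcal{T}}$, any vertex $i\ne s$, and any $S$ with $i\in S$, $s\notin S$. The unique $s$-to-$i$ path in each spanning tree $\tau$ rooted at $s$ must use at least one arc $(u,v)$ with $u\notin S$, $v\in S$, so $\sum_{u\notin S,\,v\in S}c_{uv}\ge\sum_{u\notin S,\,v\in S}\sum_{\tau\ni(u,v)}\lambda_\tau\ge\sum_{\tau\in\mathcal{T}}\lambda_\tau$; taking the infimum over $S$ and $i$ and then the supremum over packings gives $\delta^*(G)\ge\Lambda^*(G)$.

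For the reverse inequality I would first reduce to the case of integer capacities. If the $c_{uv}$ are rational with common denominator $M$, replace each arc $(u,v)$ by $Mc_{uv}$ parallel copies to obtain a multigraph $G'$ with $\delta^*(G')=M\delta^*(G)$; a collection of $\delta^*(G')$ arc-disjoint spanning $s$-arborescences in $G'$, each assigned weight $1/M$, is a feasible packing in $G$ of total value $\delta^*(G)$, whence $\Lambda^*(G)\ge\delta^*(G)$. For general real capacities, approximate from below by rationals $c^{(n)}\uparrow c$: since $\Lambda^*$ is monotone in the capacity vector and $\delta^*$ is continuous in it, $\Lambda^*(c)\ge\Lambda^*(c^{(n)})=\delta^*(c^{(n)})\to\delta^*(c)$.

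It remains to establish the integral statement: if $\rho(X)\ge k$ for every nonempty $X\subseteq\mathcal{N}\setminus\{s\}$ — where $\rho(X)$ counts the arcs entering $X$, a condition equivalent by Menger's theorem to the existence of $k$ arc-disjoint $s$-to-$v$ paths for every $v$, and exactly the condition $\delta^*(G)\ge k$ — then $G$ has $k$ arc-disjoint spanning $s$-arborescences. I would argue by induction on $k$: the base case $k=1$ is a search tree rooted at $s$, and for the step it suffices to extract one spanning $s$-arborescence $B$ with $\rho_{G-E(B)}(X)\ge k-1$ for every nonempty $X\subseteq\mathcal{N}\setminus\{s\}$, so that the inductive hypothesis applies to $G-E(B)$. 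I would build $B$ greedily, growing it as an arborescence on a vertex set $U$ initialized to $\{s\}$, at each step appending an arc $(u,v)$ with $u\in U$, $v\notin U$ chosen so as to preserve the invariant $\rho_G(X)-(\text{arcs of the current }B\text{ entering }X)\ge k-1$ for all relevant $X$; in particular no \emph{tight} set (one with $\rho_G=k$) is ever entered twice by $B$.

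The main obstacle is showing that an admissible arc always exists at each greedy step — equivalently, that the invariant can be maintained until $U=\mathcal{N}$. This is the structural heart of the theorem and is where submodularity of $\rho$ enters: an uncrossing argument shows that the maximal ``dangerous'' sets (those whose tightness would forbid appending an arc into them) reachable from $U$ form a laminar family, so an arc from $U$ into the innermost such set is always safe. Everything else — the two reductions and the weak-duality direction — is routine bookkeeping. I would also note the alternative derivation via the matroid union theorem: spanning $s$-arborescences are exactly the common bases of the graphic matroid of the underlying undirected graph and the partition matroid capping the in-degree of each $v\ne s$ at one, and matroid union/intersection then yields the packing number directly, though this merely transfers the difficulty into the matroid machinery.
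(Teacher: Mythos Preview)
The paper does not prove this statement at all: it is quoted as a classical result of Edmonds and simply cited to \cite{edmonds}, then used as a black box to construct the randomized policy in Section~\ref{treepackbnd}. Your proposal, by contrast, outlines an actual proof, and the outline is essentially the standard one (the weak-duality direction, rational-to-integer reduction, and the inductive peeling of one arborescence at a time using submodularity of the in-cut function is Lov\'asz's well-known simplification of Edmonds' original argument). The sketch is correct at the level of detail given; the step you flag as the main obstacle---existence of an admissible arc at each greedy step via uncrossing of tight sets---is indeed where all the work lies, and your description of how submodularity resolves it is accurate. One small caveat on your closing remark: spanning $s$-arborescences are common bases of the two matroids you name, but matroid intersection gives the maximum size of a common independent set, not directly a packing of common bases, so that route requires a bit more than a direct appeal to matroid union/intersection; it is not wrong, just not as immediate as you suggest.
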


Edmonds' Theorem guarantees the existence of a tree packing which has the same weight as the min-mincut of the graph. Now in the case of one-to-all broadcast in networks, wherein a node can transmit copies of any packet it has received, it is clear that the subgraph traced out by a packet in reaching all nodes forms a tree. Returning to the wireless setting, we now sketch out how to construct a randomized routing and scheduling algorithm that is throughput-optimal, using the technique developed by Andrews et al \cite{akrsvw}. Suppose we know the point $\mathbf{\tilde{c}^*}\in\mathcal{CH}(\Gamma)$ in the obtainable rate region which maximizes the min-mincut\footnote{Note that such an optimal rate point exists as the min-mincut is a continuous function of the rates, which lie in a compact set $\mathcal{CH}(\Gamma)$}, then we can schedule according to the corresponding SSS rule to achieve an ergodic rate of $\tilde{c}^*_{uv}$ across any link $(u,v)$. The network is now converted into a wired network, i.e., with edges having fixed capacities. Next we can use Edmonds' Theorem to obtain a tree packing for this fixed-capacity network, which determines how the input flow should be balanced between spanning trees. Combining these two steps, we obtain a scheme whereby we split the incoming flow according to the tree packing, and schedule using the SSS rule corresponding to $\mathbf{\tilde{c}}^*$ to stabilize the system. By a similar argument, we can obtain a tree packing given the optimal SSS rule for the FMux function computation problem. Here each round is associated to a spanning tree such that the total incoming flow (which is equal to the refresh rate) is split according to the above tree packing. This tree is henceforth referred to as the \textit{aggregation tree} of the round, and determines the route followed by the packets in that round. The routing thus taken care of, the scheduling is done according to the optimal SSS rule, and in combination, they stabilize the network. Combined with Lemma \ref{lem:minmincut}, this gives a tight characterization of the maximum refresh rate of the network, which we state in the following theorem.

\begin{theorem}
\label{thm:maxrefrate}
Consider a network performing in-network computation for an FMux function $f$. The \textit{maximum refresh rate} is defined as:
\begin{equation*}
\lambda^* = (\log |\mathcal{R}(f)|)^{-1}\max_{\mathbf{\hat{c}}\in\mathcal{CH}(\Gamma)}\delta^*(\mathbf{\hat{c}}).
\end{equation*}
Then a refresh rate of $\lambda$ can not be stabilized by any algorithm if $\lambda>\lambda^*$, and there exists a static, randomized algorithm to stabilize it if $\lambda<\lambda^*$.
\end{theorem}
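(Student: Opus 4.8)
The converse half of the statement is already in hand: Lemma~\ref{lem:minmincut} asserts that no routing and scheduling algorithm can stabilize the network when $\lambda>(\log|\mathcal{R}(f)|)^{-1}\max_{\mathbf{\hat c}\in\mathcal{CH}(\Gamma)}\delta^*(\mathbf{\hat c})=\lambda^*$. So the plan is to devote the proof to the achievability half, namely: for every $\lambda<\lambda^*$ exhibit a static randomized policy under which the queueing Markov chain of Section~\ref{sec:model} admits a stationary regime. Note first that $\lambda^*$ is well defined, since $\delta^*(\cdot)$ is continuous on the compact set $\mathcal{CH}(\Gamma)$ and hence attains its maximum at some $\mathbf{\tilde c}^*\in\mathcal{CH}(\Gamma)$; by definition of the convex hull there is an SSS rule $\pi^*$ whose long-run link rates dominate $\mathbf{\tilde c}^*$.

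Fix $\lambda<\lambda^*$ and choose $\epsilon>0$ with $(1+\epsilon)\lambda\log|\mathcal{R}(f)|<\delta^*(\mathbf{\tilde c}^*)$. The next step is to reverse all edges of $G$, keeping the capacities $\tilde c^*_{uv}$, so that the aggregator $a$ becomes the distinguished source, and to apply Edmonds' Theorem (Theorem~\ref{thm:edmonds}) to the reversed graph. This produces spanning trees which, read back in $G$, are in-trees $\tau_1,\dots,\tau_M$ directed toward $a$ (i.e.\ valid aggregation trees), with nonnegative weights $\lambda_{\tau_k}$ satisfying $\sum_{k:(u,v)\in\tau_k}\lambda_{\tau_k}\le\tilde c^*_{uv}$ for every edge $(u,v)$ and $\sum_k\lambda_{\tau_k}=\delta^*(\mathbf{\tilde c}^*)>(1+\epsilon)\lambda\log|\mathcal{R}(f)|$. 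The policy is then: assign each round $r$ independently to aggregation tree $\tau_k$ with probability $p_k=\lambda_{\tau_k}/\sum_j\lambda_{\tau_j}$; a node holding a round-$r$ packet forwards it along the unique outgoing edge of $\tau_k$ toward $a$, and whenever two packets of the same round meet they are merged by the FMux operation with no change in size; transmissions are scheduled by time-sharing the independent sets according to $\pi^*$. With this assignment the long-run rate (in bits per slot) of round traffic requesting edge $(u,v)$ is $\lambda\log|\mathcal{R}(f)|\sum_{k:(u,v)\in\tau_k}p_k\le\lambda\log|\mathcal{R}(f)|\,\tilde c^*_{uv}/\sum_j\lambda_{\tau_j}<\tilde c^*_{uv}/(1+\epsilon)<\tilde c^*_{uv}$, so every edge's offered load is strictly below the ergodic service rate it receives under $\pi^*$.

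The remaining step — turning this rate gap into stability — is the one I expect to be the main obstacle, because in-network aggregation breaks the flow-conservation equations and so the standard fixed-route stability theorems do not apply verbatim. My plan is a pathwise stochastic-dominance argument: couple the true system with a fictitious \emph{no-aggregation} system that uses the same per-round tree assignment, the same schedule drawn from $\pi^*$, and the same per-edge service order, but never merges packets (each copy travels on its own). An induction over arrival/service events shows that every queue in the true system is pointwise no larger than the corresponding queue in the fictitious system. The fictitious system is exactly a data network with static acyclic routing whose per-link load vector lies strictly inside $\mathcal{CH}(\Gamma)$ under $\pi^*$, hence is stable by the construction of Andrews et al.~\cite{akrsvw} in the wireless case (and is a classical acyclic Jackson-type network in the continuous-time wireline case). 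Stability of the dominating system together with the dominance yields positive recurrence — equivalently, a stationary regime — for the true chain under the constructed policy, which completes the achievability half and hence the theorem. A minor point to check along the way is that the generation-time round-renumbering does not disturb the coupling; it does not, since renumbering is a relabeling that commutes with forwarding, with aggregation, and with the dominance relation.
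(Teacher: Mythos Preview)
Your overall architecture matches the paper's: the converse is Lemma~\ref{lem:minmincut}, and achievability proceeds by picking a maximizing $\mathbf{\tilde c}^*\in\mathcal{CH}(\Gamma)$, applying Edmonds' Theorem (Theorem~\ref{thm:edmonds}) on the reversed graph to obtain aggregation trees with weights, randomly assigning each round to a tree in proportion to those weights, and scheduling via the SSS rule $\pi^*$. Your per-edge load computation is also correct \emph{for the aggregation system}, in which each round contributes exactly one packet to each edge of its assigned tree.

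The gap is in the dominance step. In your fictitious ``no-aggregation'' system, where packets never merge and each copy travels on its own, the packet originating at node $i$ must traverse every edge on the path from $i$ to $a$ in its assigned tree $\tau_k$. Consequently an edge $(u,v)\in\tau_k$ carries one packet per round \emph{for every node in the subtree of $\tau_k$ rooted at $u$}, so its offered load is
\[
\lambda\log|\mathcal{R}(f)|\sum_{k:(u,v)\in\tau_k}p_k\cdot\bigl|\mbox{subtree}_{\tau_k}(u)\bigr|,
\]
not the expression you computed. For edges incident to the aggregator the subtree has $N-1$ nodes, and this load typically exceeds $\tilde c^*_{uv}$ by a factor of order $N$; indeed, the entire motivation for in-network computation (cf.\ Example~1) is precisely that this data-download load is unsupportable. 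Thus the fictitious system is \emph{unstable}, and pathwise dominance by an unstable system yields no stability conclusion for the true one. The paper itself does not give a detailed stability argument at this point either; it invokes the SSS construction of \cite{akrsvw} and relies implicitly on the fact that, on a fixed aggregation tree, each edge sees one packet per round at a rate strictly below its SSS service rate, with the fork-join dependency handled by a direct Lyapunov calculation of the kind carried out in full in the proof of Theorem~\ref{thm:treeSSS}. If you wish to close the gap, replace the coupling by that Lyapunov drift computation on the useful queues $Q_i^{\tau,u}$; a sample-path comparison at the ``right'' per-edge rates does not seem to be available, since removing the fork-join constraint (so that a node need not wait for its children) makes the comparison queues \emph{smaller}, not larger, than the true ones.
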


We note that this bound, and the definition of FMux functions, is similar in spirit to the results in \cite{Appuswamy}. Theorem \ref{thm:maxrefrate} is different (and more general) than the results obtained in \cite{Appuswamy}, both in scope and technique. More generally, there is a fundamental difference in the level of abstraction with which we view the problem vis-a-vis other similar works such as \cite{Kowshik2, Moscibroda, Appuswamy}, where the focus is on the physical/link layers, and further, {\it only for wired networks}. Our result is for network layer algorithms for a more general class of networks (wired and wireless); furthermore, the algorithm based on SSS rules is an explicit (albeit static) algorithm, and uses only routing and packet aggregation at nodes. In contrast Appuswamy et al. \cite{Appuswamy} use results that {\it show the existence of source-coding based, schemes} (which are more complex than routing based schemes we use) that achieve the min-mincut in {\it noiseless, wired networks}.

The problem with such a static algorithm is that it needs prior calculation of the min-mincut and associated optimal rate point (to obtain the optimal packing of aggregation trees). A better alternative is to use the queues as proxy for learning these through dynamic algorithms based on the current system state (similar to the Backpressure algorithm\cite{akrsvw, TassEph} for data networks). The rest of the paper deals with the development of such algorithms. 

\section{Routing with Random Packet Forwarding in Wired Networks}
\label{sec:Algo1}

In this section we give a routing algorithm for {\corr acyclic} wired networks based on random packet forwarding with aggregation. This algorithm is based on an algorithm for one-to-all network broadcast in wireline networks by Massouli{\'e} et al.\cite{MassTwigg}, which demonstrates that random `useful'-packet forwarding achieves the min-mincut bound. We modify their approach to obtain a dual version applicable to FMux computation in wireline networks.

In in-network FMux computation, as described before, a new round of packets arrives at all sensor nodes in a synchronous manner, and need to be routed to the aggregator. For the broadcast problem (where packets arrive at the source and need to be routed to all other nodes), an optimal algorithm\cite{MassTwigg} is as follows: for any idle link in the network, the source node randomly picks a packet that the receiver does not have (defined as a 'useful' packet) and transmits it on that link. We now define an analogous notion of a useful packet for in-network aggregation, and show how it can be used to derive an optimal random routing algorithm for FMux function computation.

A natural invariant in broadcast is that the trace of a round of packets {\it always follows a spanning tree}. This is not in general true in aggregation; {\corr however in the case of acyclic networks}, one can impose additional constraints to ensure that a transmission does not lead to an {\it isolated packet}, i.e., a packet at a node such that no neighboring node has a packet from the same round, thus preventing its aggregation. This can be ensured by defining an appropriate notion of a 'useful' packet and {\it only transmitting useful packets}. We define a packet in node $i$ to be useful to neighbor $j$ if {\it  (a)} $j$ has a packet of the same round (hence ensuring aggregation); and {\it  (b)} transferring the packet to $j$ does not result in an isolated neighbor $k$ of $i$. The routing algorithm now consists of randomly forwarding useful packets whenever a link is idle. {\corr In Appendix \ref{app:Algo2}, we prove that this definition leads to packets being routed on spanning trees.}

Formally, the algorithm is a work-conserving policy whereby each node $i\in\mathcal{N}$ ensures that an outgoing edge $(i,j)\in\mathcal{L}$ is engaged in a packet transfer {\it if and only if} there are packets in $i$ that are useful to $j$. For a node $i$, we define $N^-(i)=\{j\in\mathcal{N}:(j,i)\in\mathcal{L}\}$ and $N^+(i)=\{j\in\mathcal{N}:(i,j)\in\mathcal{L}\}$ to be the `in-neighborhood' and `out-neighborhood' of $i$ respectively. 
Now at a given time $t$, packets of a round $r$ can be in $3$ states under the algorithm (analogous to the notation Massouli{\'e} et al.\cite{MassTwigg}):
\begin{enumerate}
	\item Sucessfully aggregated, i.e., present only at aggregator $a$.
	\item Idle, i.e., not being transmitted at any edge. Packets of an idle round $r$ are present at nodes of some set $S\subset \mathcal{N}$, henceforth called the {\it footprint-set of round $r$}, and denoted $FP_r[t]$. We define a {\it valid footprint-set} to be one where the subgraph induced by the set contains a spanning tree rooted at $a$ {\corr (equivalently, each node in the footprint set has a directed path to $a$)}; the collection of such sets is denoted as $\mathcal{S}$. Finally, for all $S\in\mathcal{S}, X_S[t]\triangleq $ is a count of idle rounds located in $S$.
	\item Active, i.e., being transmitted on at least $1$ edge. The collection of active rounds is given by $\mathcal{A}[t]\triangleq\{R_1[t],R_2[t],\ldots ,R_m[t]\}$, where round each round $R_k[t]$ has an associated pair $(FP_k[t],E_k[t])\in\mathcal{S}\times 2^{\mathcal{L}}$; here $FP_k[t]$ is the footprint-set, and $E_k[t]\subset \mathcal{L}$ is the set of edges on which packets of round $R_k[t]$ are being transmitted.	
\end{enumerate}
The pair $(\{X_S[t]\}_{S\in\mathcal{S}},\mathcal{A}[t])$ forms a complete description of the system; we henceforth consider the Markov Chain on this system description for describing and analyzing the algorithm. Further, for ease of exposition, we supress the dependence on time whenever clear from context.

Now we can formalize the notion of a \textit{useful packet} for transmission. We define an edge $(u,v)$ to be idle if $(u,v)\notin E_r \forall r\in\mathcal{A}[t]$ (i.e., no packet it being transmitted on it). For a given {\it idle} edge $(u,v)$ at time $t$, a packet of round $r$ (idle or active) is said to be useful if:
\begin{enumerate}
\item (Aggregation Condition) Both $u$ and $v$ are in $FP_r[t]$.  
\item (Non-isolation Condition) For all $w\in FP_r[t]\cap N^-(u)$, there is an alternate route for aggregation, i.e., $|FP_r[t]\cap N^+(w)|\geq 2$. 
\end{enumerate}
Figure \ref{fig:useful} illustrates the above conditions for determining whether a packet is useful with respect to a link. Note that the definition of valid footprint-sets is consistent with definition of useful packets: by ensuring transmission of only useful packets, we ensure that the footprint of any round must be a valid footprint set (i.e., always containing a spanning tree rooted at $a$).
\begin{figure}[t]
\centering \includegraphics[scale=0.6]{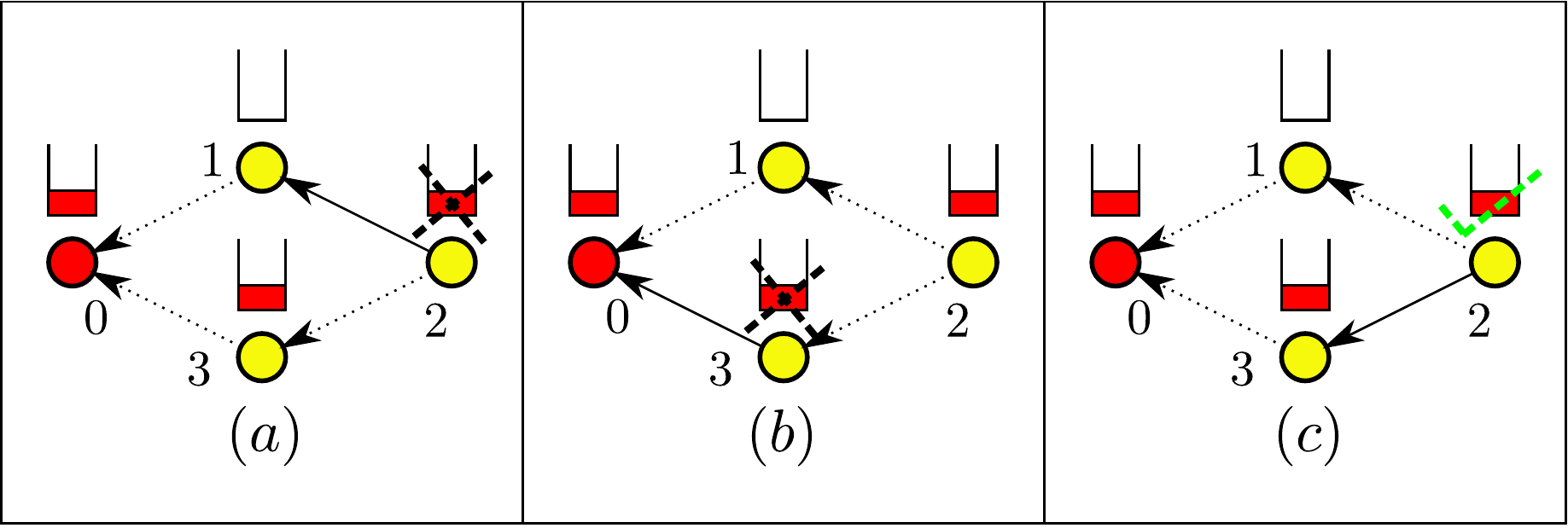}
\caption{Illustration of notion of `useful' packet: For the single round in the above network, packets are not useful for: $(a)$ Link $(2,1)$ because of violating the aggregation condition (node $1$ has no corresponding packet), $(b)$ Link $(3,0)$ because of violating the non-isolation condition (node $2$'s packet gets isolated). $(c)$ The packet is useful for link $(2,3)$.}
\label{fig:useful}
\end{figure}

Next we impose a work conservation requirement on the system in the following manner. Define $X_{+u-v}\triangleq\sum_{S\in\mathcal{S}:v\in S,u\notin S}X_{S+u}$ to be the number of useful idle packets across edge $(u,v)$, and similarly $X^a_{+u-v}$ to be the number of active packets at $u$ which are useful to $v$. Then we impose the following activity condition on the network--$\forall\,(u,v)\in \mathcal{L},$ one of the following is true:
\begin{itemize}
\item[i .]$u\in F \mbox{ for some }(W,F)\in A$
\item[ii.]$X_{+u-v}=0, X^a_{+u-v}=0,$
\end{itemize}
or in words, an edge is active as long as there is at least one useful packet across it. We now describe the routing algorithm, which performs random useful packet forwarding with aggregation while ensuring the activity condition. The routing is performed whenever a link is idle.

\begin{algorithm}[H]
\label{algo:wlinerpfalgo}
\SetAlgoLined
\KwIn{An idle link $(u,v)$, i.e., a link with no packet transmitting on it currently.} 
\KwOut{A routing decision of which packet to transmit on $(u,v)$.}
\KwSty{Step 1:} If $\nexists$ useful packets across $(u,v)$, leave link idle.\\
\KwSty{Step 2:} Otherwise, pick a useful packet uniformly at random and start transmitting.\\ 
\caption{Random useful packet forwarding with aggregation for FMux computation in wireline networks.}
\end{algorithm}

\noindent And finally we have the main theorem for the stability of the algorithm.
\begin{theorem} 
\label{thm:randomopt}
{\corr For a directed acyclic network operating} under algorithm \ref{algo:wlinerpfalgo}, the network is stable if
$\lambda<(\log |\mathcal{R}(f)|)^{-1}\delta^*$, where $\delta^*\triangleq\min_{S\in\mathcal{S}}\sum_{v\in S}\sum_{u\notin S}c_{uv} $.
\end{theorem}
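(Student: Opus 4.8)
The plan is to establish positive recurrence of the Markov chain $(\{X_S[t]\}_{S\in\mathcal{S}},\mathcal{A}[t])$ by a fluid-limit argument, adapting to the aggregation setting the analysis that Massouli{\'e} et al.\ \cite{MassTwigg} use for broadcasting. I would begin with the structural reductions that tame the state space. Since distinct active rounds occupy pairwise-disjoint nonempty edge sets $E_k\subseteq\mathcal{L}$, there are at most $L$ active rounds at any time, so $\mathcal{A}[t]$ takes values in a finite set, and the only unbounded coordinates are $(X_S)_{S\in\mathcal{S}}$, with $\mathcal{S}$ finite. Next, using the appendix result that in a DAG useful-packet forwarding keeps every footprint valid, I would note that a round's footprint evolves as a strictly decreasing chain $\mathcal{N}=S_0\supsetneq S_1\supsetneq\cdots$ of valid footprint sets, losing exactly one node per useful transmission and completing when it reaches $\{a\}$ --- the dual of the broadcast invariant that a received-set grows monotonically from $\{s\}$ to $\mathcal{N}$. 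Work-conservation (the activity condition) is retained throughout.

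On the capacity side, I would invoke Edmonds' Theorem (Theorem \ref{thm:edmonds}) on the edge-reversed graph: the hypothesis $\lambda<(\log|\mathcal{R}(f)|)^{-1}\delta^*$ is equivalent to $\lambda\log|\mathcal{R}(f)|<\Lambda^*$, so there is a fractional packing $\{\lambda_\tau\}$ of aggregation trees with $\sum_\tau\lambda_\tau>\lambda\log|\mathcal{R}(f)|$ that respects all edge capacities. This packing certifies a strictly positive capacity slack across every bottleneck cut of the graph and is the object against which the fluid drift will be compared.

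The core of the argument is fluid-model stability. I would pass to fluid limits of the (space- and time-scaled) process; the fluid state is a vector $(x_S)_{S\in\mathcal{S}}$ of idle-round densities, and the goal is to show it reaches the origin in finite time and stays there. For this I would use a Lyapunov function respecting the partial order on footprint sets --- following \cite{MassTwigg}, one of the form $V=\sum_{S\in\mathcal{S}}w_S\,g(x_S)$ with weights $w_S$ adapted to the cut structure and $g$ convex (or piecewise linear in a ``level'' $|S|-1$). The decisive lemma is that whenever the fluid backlog is positive, work-conservation forces every edge carrying a useful packet to run at capacity, and moreover enough useful packets exist across the bottleneck cut(s) that the aggregate rate at which rounds cross such a cut is at least $\delta^*$ (in bits), hence strictly above $\lambda\log|\mathcal{R}(f)|$; telescoping this over cuts through the structure of $V$ yields $\dot V<0$ off the origin. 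Standard fluid-stability criteria (as used in \cite{akrsvw,MassTwigg}, applied to this irreducible, aperiodic chain) then upgrade this to positive recurrence of the original chain.

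The main obstacle is exactly this decisive lemma --- showing that the non-isolation condition does not starve the bottleneck. One must rule out fluid configurations that persist with large backlog because every forwarding move that would push a packet of some round across a bottleneck cut would isolate another packet of that same round, so the cut edges sit idle despite heavy load; this is the aggregation analogue of the ``wasted'' duplicate receptions that \cite{MassTwigg} must control in broadcasting. This is precisely where acyclicity is load-bearing: it is what guarantees (via the appendix) that every footprint contains a spanning tree to $a$, hence a node outside a cut whose packet cannot yet usefully cross still has an in-footprint neighbor to aggregate with, so the cut cannot stall. Making this quantitative --- matching the saturated-cut throughput against the aggregation-tree packing from Edmonds' Theorem so that it provably beats $\lambda\log|\mathcal{R}(f)|$ --- is the technical heart of the ``suitable modification'' of \cite{MassTwigg}.
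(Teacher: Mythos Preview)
Your high-level architecture --- fluid limits, Lyapunov function, work conservation, then lift to positive recurrence --- matches the paper. But two concrete choices in your proposal diverge from what actually makes the argument go through, and the first is a real gap.

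First, the Lyapunov function. You propose $V=\sum_{S\in\mathcal{S}}w_S\,g(x_S)$ with convex $g$; this is \emph{not} what \cite{MassTwigg} uses, and it is not what the paper uses either. The paper's Lyapunov function is a \emph{max}:
\[
L(x)=\max_{S\in\mathcal{S}}\beta_{|S|}\,x_{\nsubseteq S},\qquad x_{\nsubseteq S}\triangleq\sum_{S'\in\mathcal{S}:\,S'\nsubseteq S}x_{S'},
\]
with $\beta_i=(1+1/\alpha)^{i-1}$. The aggregated counter $x_{\nsubseteq S}$ (rounds whose footprint is not yet entirely inside $S$) is the right object because its inflow is exactly $\lambda$ and its outflow is governed by transfers across the cut $(\mathcal{N}\setminus S,S)$. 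Your sum-type function over raw footprint counters $x_S$ does not isolate a single cut, and there is no clear way to telescope it into a comparison with $\delta^*$.

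Second, Edmonds' theorem plays no role here. The paper never invokes a tree packing to bound the drift of Algorithm~\ref{algo:wlinerpfalgo}; that machinery is used only for the \emph{wireless} algorithm in Section~\ref{sec:Algo2}, where rounds are pre-assigned to trees. For random forwarding, the drift bound comes from a purely combinatorial lemma (the paper's Lemma~\ref{lem:counting}): for any set $S^*$ achieving the max in $L$, optimality forces (i) $x_{+u-v}\geq(1+\alpha)^{-1}x_{\nsubseteq S^*}$ for every cut edge $(u,v)$, and (ii) the ``leakage'' terms $x_{S'+u}$ with $S'\nsubseteq S^*$ are each at most $\alpha\,x_{+u-v}$. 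Plugging these into the fluid dynamics gives
\[
\frac{d}{dt}x_{\nsubseteq S^*}\leq \lambda-\sum_{v\in S^*,\,u\notin S^*}c_{uv}+O(\alpha),
\]
and choosing $\alpha$ small enough yields the uniform negative drift $-\epsilon$. So the ``decisive lemma'' you anticipate is not a capacity-matching argument against a tree packing; it is this two-sided combinatorial control of the argmax set, and that is the piece your proposal is missing.
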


The proof closely follows the proof of Massouli{\'e} et al.\cite{MassTwigg}, with appropriate modifications in order to perform aggregation rather than broadcast. Similar to \cite{MassTwigg}, it proceeds in three stages-
\begin{enumerate}
\item Defining the fluid limit of the Markov chain, and associated convergence results.
\item Defining a Lyapunov function for the fluid system, and showing negative drift.
\item Using the fluid Lyapunov and convergence results to show stability of the original system.
\end{enumerate}
The critical additions that we make are in the appropriate definition of a useful packet, and in identifying the appropriate counter variables that capture FMux aggregation in networks. Further, in Lemma \ref{lem:counting}, we derive a crucial combinatorial relation between these counter variables parallel to the main lemma in \cite{MassTwigg}. The details of the proof are provided in Appendix A.

\section{Scheduling With Aggregation-Tree Routing in Wireless Networks} 
\label{sec:Algo2}

The presence of interference in wireless networks necessitates efficient scheduling of independent sets in addition to routing. Given an SSS rule, we can modify Theorem~\ref{thm:randomopt} to show that random packet aggregation supports a rate upto the min-mincut under the corresponding SSS rule. However dynamic scheduling in order to achieve the optimal SSS rule $\pi^*$ (i.e., with the largest min-mincut) needs an alternate routing technique.

We now describe an alternate approach to throughput-optimal dynamic scheduling and routing for in-network FMux computation over wireless networks. Unlike wired networks, where routing was performed via random packet forwarding, we now focus on schemes based on pre-allocating the route to be followed by the packets of each round, and then scheduling under these routing constraints. Building on the intuition that the ``correct'' routing structures for FMux computation are spanning trees rooted at the aggregator (henceforth refered to as aggregation trees), we split the algorithm into two components:
\begin{itemize}
\item[$\bullet$] A routing component that maps incoming rounds to aggregation trees. Once a round is assigned to a tree, its packets follow the edges of the tree to the aggregator. 
\item[$\bullet$] A scheduling component uses the knowledge of the next hop of each packet to determine an optimal independent set for transmission. 
\end{itemize}
The main result of this section is that there is a dynamic algorithm of this type that is throughput optimal for wireless networks. More specifically, we present a throughput-optimal algorithm based on `greedy' routing (whereby the aggregation tree is chosen in a greedy manner) and `MaxWeight'-type scheduling (whereby links are scheduled according to a maximum weighted independent set problem, with link weights determined by the queues).

Before presenting the general algorithm, we consider some specific example networks to give an intuition as to how the algorithm is constructed; in particular we illustrate the scheduling and routing components separately. Finally, in Section \ref{ssec:FMuxgeneral}, we present the complete algorithm for general graphs, and prove its throughput-optimality.

\subsection{Scheduling With Aggregation-Tree Routing for FMux Computation: Preliminaries and Some Examples} 

In this section we give some examples to build some intuition for the general algorithm we present in Section \ref{ssec:FMuxgeneral}. Suppose the network is a tree rooted at node $a$. For node $i\in\mathcal{N}$, we define $p(i)$ to be the (unique) parent node and $C(i)$ to be the set of immediate children nodes in the aggregator tree. Before specifying the queueing dynamics for this system, we first need a lemma that reduces the space of all possible scheduling policies to a smaller set of policies for which we can write the dynamics in a convenient manner. 

A scheduling policy for tree aggregation is said to be of type \textit{aggregate and transmit} or Type-AT if for every node $i$, and every round $r$, a packet of round $r$ is transmitted from $i$ to $p(i)$ only after receiving the corresponding round $r$ packet from every node $j\in C(i)$. A Type-AT policy thus prevents a round from being transmitted to its parent until it has aggregated all corresponding packets from its children--this is analogous to the non-isolation requirement in Section \ref{sec:Algo1}. Further, this ensures that the flow on each edge of the tree is equal to the input rate of packets on that the tree. Henceforth, we restrict to Type-AT policies, which are sufficient by the following lemma. 

\begin{lemma}
\label{ATisgood}
For an aggregation tree and a scheduling policy that stabilizes the system for given refresh rate $\lambda$, there exists a scheduling policy that stabilizes the system for the same refresh rate, and in addition, is of Type-AT.
\end{lemma}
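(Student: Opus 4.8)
The plan is to take an arbitrary stabilizing policy $\sigma$ for the aggregation tree and transform it into a Type-AT policy $\sigma'$ that stabilizes the system at the same refresh rate $\lambda$, by a ``delayed-transmission'' simulation argument. The key idea is that being Type-AT only \emph{postpones} transmissions: under $\sigma$ a node $i$ may transmit a round-$r$ packet to $p(i)$ before it has received all corresponding packets from its children $C(i)$, but since the function is FMux, the packet that $i$ eventually would want $p(i)$ to have is the aggregate of $i$'s own reading and all of $C(i)$'s contributions for round $r$. So I would have $\sigma'$ mimic $\sigma$ edge-by-edge, except that whenever $\sigma$ would transmit round $r$ on edge $(i,p(i))$ before $i$ has aggregated all of $C(i)$'s round-$r$ packets, $\sigma'$ instead \emph{holds} that transmission opportunity in abeyance (leaves the link idle or, to be work-conserving, serves some other eligible round) and ``credits'' node $i$ with a deferred transmission; once the last child packet of round $r$ arrives at $i$, node $i$ uses the next available transmission slot(s) on $(i,p(i))$ to push the (now fully aggregated) round-$r$ packet up.

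The steps, in order, would be: (1) Set up the coupling: run $\sigma$ and $\sigma'$ on the same arrival sample path, and define for each edge $(i,p(i))$ a bijection between service opportunities used by $\sigma$ to move round $r$ upward and service opportunities used by $\sigma'$ to move round $r$ upward, arranged so that $\sigma'$'s opportunity for a given round occurs no earlier than $\sigma$'s but is matched to a \emph{later or equal} service slot on the same edge. (2) Show that this matching is well-defined and that $\sigma'$ never ``runs out'' of service slots — i.e., the cumulative number of round-completions on edge $(i,p(i))$ under $\sigma'$ up to time $t$ is at least that under $\sigma$ up to time $t - d$ for some finite shift $d$, because the only thing $\sigma'$ waits for is the arrival of children's packets, which under $\sigma$ were also required (eventually) for the aggregator to compute $f$ correctly. (3) Argue that $\sigma'$ is Type-AT by construction. (4) Conclude stability: since $\sigma$ stabilizes the chain (a stationary regime exists, with all queue-lengths finite in steady state), and $\sigma'$'s queue at each node is dominated, up to a bounded additive shift capturing ``packets waiting for siblings,'' by a time-shifted version of $\sigma$'s queue plus the at-most-$|C(i)|$ extra held packets per round currently being assembled at $i$, the chain under $\sigma'$ is also positive recurrent. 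Here I would lean on a standard stochastic-dominance / Loynes-type comparison, or alternatively exhibit a Lyapunov function for $\sigma'$ built from the one implied by $\sigma$'s stability.

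The main obstacle, I expect, is making step (2) rigorous: one must ensure that the deferral never cascades into an unbounded backlog. The subtle point is that node $i$'s transmissions to $p(i)$ are delayed until $i$'s children have delivered round $r$, but the children are themselves running $\sigma'$ and hence \emph{their} uploads to $i$ are delayed for the same reason, recursively down the tree. So the total delay absorbed at the root of the subtree could a priori grow with tree depth, or worse, interact badly with the service schedule. The resolution should be that the extra delay is bounded by a quantity depending only on the (fixed, finite) tree — essentially each round picks up at most a bounded number of ``wait for sibling'' steps per level, and the number of levels is at most $N$ — so the time-shift $d$ is a finite constant independent of $\lambda$ and of time; once $d$ is shown finite, positive recurrence transfers immediately. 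A clean way to handle this is to observe that a Type-AT policy on the tree is equivalent to routing each round on the tree with the ``all children before parent'' precedence constraint, which is exactly the aggregation-tree routing discipline used (and shown sufficient) in Section~\ref{treepackbnd}; thus one can appeal to the fact that the achievable rate region under Type-AT tree policies already contains every rate below the min-mincut of the tree, which in turn dominates any rate $\sigma$ could have supported. I would present the coupling argument as the primary proof and mention this rate-region viewpoint as the conceptual reason it works.
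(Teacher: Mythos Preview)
Your coupling construction --- run $\sigma$ and a modified policy $\sigma'$ on the same sample path, with $\sigma'$ holding any premature round-$r$ transmission at node $i$ until all of $C(i)$'s round-$r$ packets have arrived, and then releasing it at the next time $\sigma$ itself transmits round $r$ on $(i,p(i))$ --- is exactly the construction the paper uses. Where you diverge is in how stability is transferred from $\sigma$ to $\sigma'$.

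You aim for a uniform time-shift $d$ and a slot bijection, and you correctly flag the cascading-deferral issue as the main obstacle; in fact a deterministic finite $d$ need not exist (the wait for a slow sibling subtree can have unbounded support even in steady state), so your step~(2) as written would not go through without a genuinely stochastic domination argument. The paper sidesteps this entirely with a much cruder packet-count bound. Under the coupling, $\sigma'$ transmits round $r$ on an edge only at an instant when $\sigma$ does so as well, and since $\sigma$ is stable and must eventually push every descendant's round-$r$ contribution across $(i,p(i))$, such an instant exists after $\sigma'$'s node $i$ is ready; an induction on depth then gives that $\sigma'$ clears every round on every edge no later than $\sigma$ does. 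Hence the set of rounds alive under $\sigma'$ at time $t$ is contained in the set of rounds alive under $\sigma$ at time $t$. Each live round contributes at least one packet under $\sigma$ and at most $|\mathcal{N}|$ packets under $\sigma'$, so pathwise $Q^{\sigma'}(t)\le |\mathcal{N}|\, Q^{\sigma}(t)$, and stability of $\sigma'$ is immediate. This multiplicative bound is precisely what dissolves your ``main obstacle'': you never need to control how long a packet waits for its siblings, only that every round alive under $\sigma'$ is still alive under $\sigma$. Your Loynes/time-shift route could presumably be made to work with more effort, but it buys nothing here; and the rate-region backup via Section~\ref{treepackbnd} is unnecessary (and mildly circular in spirit, since the lemma is meant to justify restricting to Type-AT policies before that machinery is invoked).
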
  

\begin{proof}
Given any stabilizing policy, we can use a standard coupling argument to obtain a scheduling policy of Type-AT. Whenever the policy transfers a packet violating non-isolation, the modified algorithm stores the packet at the same node. This continues until the node has received all packets of that round from its children nodes. Now the next time the policy transmits a packet of the same round from that node (which we know happens as the algorithm is stable), the modified algorithm transmits the aggregated packet. However since each round starts off with $|\mathcal{N}|$ packets, this means that the number of packets under the modified algorithm is less than or equal to  $|\mathcal{N}|$ times the number of packets under the non Type-AT algorithm. Since the original policy is stable, hence the modified policy is also stable, and is of Type-AT. 
\end{proof}

We now consider some example networks with $N$ nodes, where the aggregator node $a$ desires a function of the sensed data. Assume that each sensor node records a value from an ordered, finite set $\mathcal{X}$, and the aggregator wants the MAX of these values (an FMux function). The computation at the nodes consists of taking all available packets of a given round, and retaining the one with the largest value. In the following examples, we focus on the routing and scheduling aspects of the problem: first we study how to schedule links to deal with interference under a single aggregation tree; next we allow for collections of aggregation trees with fixed flows and show how to mix flows across these trees; finally we show a simple example of how dynamic routing over many trees can be achieved. In the next section, we combine these to obtain a dynamic scheduling and routing algorithm for general network topologies. 

\begin{center}
  \begin{tabular}{p{114mm}}
    \hline
    \hline
 	\\
  \end{tabular}
\end{center}
\vspace{-0.3cm}
\textit{Example $3$ (Single Aggregation Tree):} Consider a sensor network where the MAX is computed by combining data on a single aggregation tree. We now modify the queueing model of Section \ref{sec:model} to ensure that a policy is Type-AT. Each node $i$ maintains two queues: $\mathcal{Q}_i^{nu}[t]$ corresponding to `not-useful' packets which are awaiting packets from $C(i)$ with the same round index, and $\mathcal{Q}_i^u[t]$ corresponding to `useful' packets which are ready for transmission to $p(i)$, having received and calculated the MAX of all corresponding packets from nodes in $C(i)$. We also define $\mathcal{Q}_i[t]=\mathcal{Q}_i^u[t]\bigcup\mathcal{Q}_i^{nu}[t]$, and use $Q_i^u[t], Q_i^{nu}[t]$ and $Q_i[t]$ to denote the cardinality of the appropriate queues.

Packets entering the network at node $i$ at time $t$ are stored in $\mathcal{Q}_i^{nu}[t]$ except in leaf nodes where they are stored in $\mathcal{Q}_i^u[t]$. A node only transmits packets which are in $\mathcal{Q}_i^u[t]$ in order to ensure that the policy is of Type-AT. When node $i$ receives packets corresponding to round $r$ from all nodes in $C(i)$, it retains the packet with the maximum value and stores it in $\mathcal{Q}_i^u[t]$. Formally, we can write the queue dynamics as:
\begin{align*}
\mathcal{Q}_i^{nu}[t+1]&=\mathcal{Q}_i^{nu}[t]+\mathcal{A}_i[t]-\mathcal{I}_i[t],\\
\mathcal{Q}_i^u[t+1]&=\mathcal{Q}_i^u[t]+\mathcal{I}_i[t]-\mathcal{D}_{(i,p(i))}[t]. 
\end{align*}
Here $\mathcal{D}_{(i,p(i))}[t]$ represents the packets transmitted from node $i$ to its parent in time slot $t$ and $\mathcal{I}_i[t]$ denotes the internal transfer of packets at node $i$ from unaggregated to aggregated ($r\in \mathcal{I}_i[t]$ if $r\in D_{(j,i)}[t]$ for at least one $j\in C(i)$ and $r\notin \cup_{j\in C(i)}\mathcal{Q}_j[t+1]$). The cardinality of $\mathcal{D}_{(i,p(i))}[t]$ is henceforth denoted as $D_{ip(i)}[t]$ which represents the number of packets transmitted over link $(i,p(i))$ in time slot $t$.  

One observation regarding these dynamics is that unlike data networks, under Type-AT policies, a packet transmission by node $i$ does not change the total size of its parent's queues $Q_{p(i)}[t]$ (this is in general due to the FMux property). Further, each unaggregated round in the network has a useful packet at some node. Thus, we obtain the following scheduling algorithm, which is a modified version of the Backpressure policy \cite{TassEph} to account for these facts:

\begin{algorithm}[H]
\label{algo:compbpalgo}
\SetAlgoLined
\KwIn{Time slot $t$, queue states $\{\mathcal{Q}_i^u[t],\mathcal{Q}_i^{nu}[t]\}_{i\in\mathcal{N}}$, incoming packets $\mathcal{A}_i[t]$, admissible rate region $\Gamma$}
\KwSty{Step 1:} Place incoming packets to sensor $i$ in $\mathcal{Q}_i^{nu}[t]$ for non-leaf nodes, and $\mathcal{Q}_i^u[t]$ for leaf nodes.\\
\KwSty{Step 2:} Compute $\mathbf{c}^*[t]$ as:
\begin{equation*}
\mathbf{c}^*[t]=\arg\max_{\mathbf{c}\in\Gamma}\sum_{i\in\mathcal{N}}Q_i^u[t]c_{ip(i)}[t].
\end{equation*}.\\
\KwSty{Step 3:} Consider node $i$. If $c^*_{ip(i)}[t]>0$ and $Q_i^u[t]>0$, then transmit the first $D_{ip(i)}[t]$ packets, where
\begin{equation*}
D_{ip(i)}[t]=\min(c^*_{ip(i)}[t], Q_i^u[t]).
\end{equation*}
\end{algorithm}

\begin{center}
  \begin{tabular}{p{114mm}}
    \hline
    \hline
 	\\
  \end{tabular}
\end{center}
\vspace{-0.3cm}

The above example indicates how the algorithm chooses independent sets for a single class of packets. Next we consider a network which uses a collection of aggregation trees for routing, therefore requiring the algorithm to make an additional decision of which packet to transmit on a scheduled link.

\begin{center}
  \begin{tabular}{p{114mm}}
    \hline
    \hline
 	\\
  \end{tabular}
\end{center}
\vspace{-0.3cm}
\textit{Example $4$ (Multiple aggregation trees):} Consider a network modeled by a directed graph where we restrict the routing to a specified collection of aggregation trees. We assume that each tree $\tau$ has a pre-determined arrival rate $\lambda_{\tau}$ of rounds on it. Each new round is associated with a given tree in accordance to the arrival rates, thereby completely specifying the routing. In each time slot, flows from different trees can be scheduled for transmission. We first need some additional notation.

Let $\mathcal{T}$ be the set of spanning trees that are used for routing. Each incoming round is tagged with a specific aggregation tree $\tau\in\mathcal{T}$, which specifies the route to be followed by packets of that round while calculating the MAX at each node. Define $p^\tau(i),C^\tau(i)$ to be the parent and children nodes of node $i$ on tree $\tau$. Also, define $A[t]=\sum_{\tau\in\mathcal{T}}A_i^\tau[t]$ and $\lambda=\sum_{\tau\in\mathcal{T}}\lambda_\tau$ to be the given splitting of input traffic between the aggregation trees. The queueing model is an extension of the previous model. Each node $i$ maintains two queues for each tree $\tau$: $\mathcal{Q}_i^{\tau,nu}[t]$ corresponding to unaggregated packets which are awaiting packets from $C^\tau(i)$ with the same round index (not-useful), and $\mathcal{Q}_i^{\tau,u}[t]$ corresponding to aggregated packets which are ready for transmission to $p^\tau(i)$ (useful). We 
use $Q_i^{\tau,u}[t]$ and $Q_i^{\tau,nu}[t]$ to denote the cardinality of the appropriate queues. The queue update equations are similar to before.

The scheduling algorithm for this network is similar to the single aggregation tree, with the added step that the weight of a link is now given by the maximum queue backlog over all queues competing for that link. Formally we have:
\begin{algorithm}[H]
\SetAlgoLined
\KwIn{Time slot $t$, queues $\{\mathcal{Q}_i^{\tau,u}[t],\mathcal{Q}_i^{\tau,nu}[t]\}_{i\in\mathcal{N},\tau\in\mathcal{T}}$, incoming packets $\mathcal{A}_i[t]$, admissible rate region $\Gamma$.}
\KwSty{Step 1:} Place incoming packets as before.\\
\KwSty{Step 2:} Calculate $P_{ij}[t]=\max_{\tau\in\mathcal{T}:(i,j)\in\tau}Q_i^{\tau,u}[t]$. Also define $\tau^*(i,j)[t]$ as the tree which maximizes $P_{ij}[t]$.\\
\KwSty{Step 3:} Compute schedule $\mathbf{c}^*[t]$ as:
\begin{equation*}
\mathbf{c}^*[t]=\arg\max_{\mathbf{c}\in\Gamma}\sum_{(i,j)\in\mathcal{L}}P_{ij}[t]c_{ij}[t].
\end{equation*}\\
\KwSty{Step 4:} Consider link $(i,j)$. If $c^*_{ij}[t]>0$, then transmit the first $D_{ij}[t]$ packets of queue $Q_i^{\tau^*(i,j)[t],u}[t]$, where:
\begin{equation*}
D_{ij}[t]=\min(c^*_{ij}[t], Q_i^{\tau^*(i,j)[t],u}[t]).
\end{equation*}
\end{algorithm}

\begin{center}
  \begin{tabular}{p{114mm}}
    \hline
    \hline
 	\\
  \end{tabular}
\end{center}
\vspace{-0.3cm}

The above two examples indicate how the scheduling algorithm works when the routing is specified. As we mentioned before, the routing component of the algorithm assigns incoming packets to aggregation trees. The challenge is to do so in a dynamic manner, i.e., to route the packets based on network state alone, and not using pre-computed rates for each tree. As we mentioned before, this routing decision is made in a `greedy' manner. In the next example, we consider a simple network to illustrate this.

\begin{center}
  \begin{tabular}{p{114mm}}
    \hline
    \hline
 	\\
  \end{tabular}
\end{center}
\vspace{-0.3cm}
\begin{figure}[b]
\centering \includegraphics[scale=0.6]{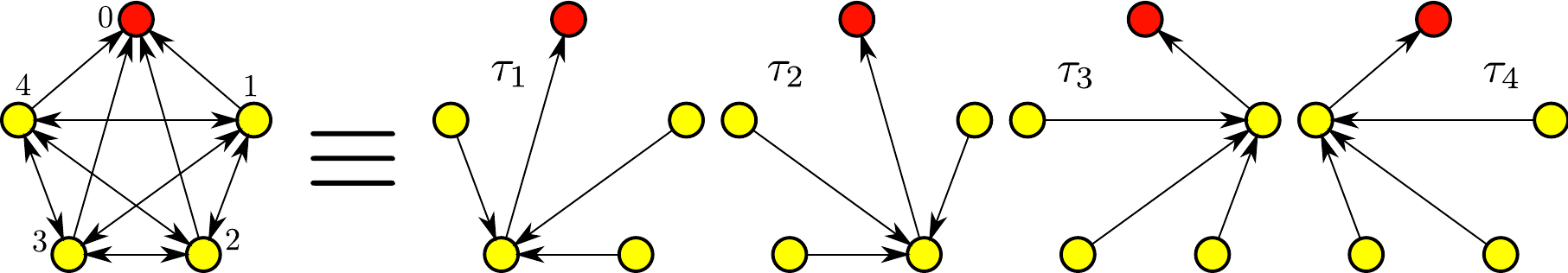}
\label{fig:cliquetrees}
\caption{Decomposition of a complete graph on $5$ nodes into $4$ edge-disjoint aggregation trees.}
\end{figure}
\textit{Example $5$ (Complete Graph):} Consider a network which in the form of a complete graph of $N$ nodes labelled $\{0,1,\ldots,N-1\}$ with node $0$ denoting the aggregator node (which again wants to calculate the MAX value of the data at all the other nodes), and with each link having unit capacity. As we claimed earlier, the min-min-cut of this network can be achieved by packing $N-1$ aggregation trees. In particular, consider the set of depth $2$ trees $\{\tau_i\}_{i=1}^{N-1}$, where tree $\tau_i$ consists of nodes $\{1,3,\ldots,N-1\}\setminus\{i\}$ at the bottom level connected to node $i$ which is connected to the aggregator, i.e., node $0$ (for example, consider the decomposition of a $5$ node complete graph in figure $2$). These trees are clearly edge-disjoint and hence they can each support a load of $1$ to achieve a tree packing of $N-1$ (as for each edge of the graph, there is a single such tree which traverses it. Since all the edges have equal capacity, therefore putting unit capacity on each tree gives us a feasible packing). Hence they are optimal.

There are two ways to route packets on these trees. Since we know that the optimal load on each tree is $1$ unit, we can associate each incoming round of packets to tree $\tau_i$ with probability $\frac{1}{N-1}$. Alternately, when a new round of packets arrives, we can load it on the tree $\tau_i$ that  has the least total number of packets on it. Intuitively this scheme also asymptotically achieves the appropriate load balancing. In the next section, we formalize this notion of `greedy' tree-loading for general graphs, and further show that it indeed does achieve the optimal tree-packing. A more subtle point is that we may not a priori know the correct trees to route on (unlike in this example), and a surprising result is that it is sufficient to perform greedy tree-loading over \textit{all} aggregation trees and still remain throughput-optimal. 
\begin{center}
  \begin{tabular}{p{114mm}}
    \hline
    \hline
 	\\
  \end{tabular}
\end{center}
\vspace{-0.3cm}

\subsection{Scheduling With Aggregation-Tree Routing for FMux Computation: The General Algorithm} 
\label{ssec:FMuxgeneral}

Finally we present the complete dynamic algorithm for FMux computation. The algorithm separates the routing and scheduling components as follows: when a round of packets arrives in the network, we first `load' all packets of the round on an aggregation tree (thereby fixing the routing); next, in each time slot, scheduling is done according to a modified MaxWeight policy. 

The routing is performed using a {\it greedy tree-loading policy}, wherein all incoming rounds in a time slot are loaded on the tree with smallest sum useful-queue, i.e., least number of useful packets. Formally, we have:

\begin{algorithm}[H]
\label{algo:treechoosealgo}
\SetAlgoLined
\KwIn{Time slot $t$, queues $\{\mathcal{Q}_i^{\tau,nu}[t],\mathcal{Q}_i^{\tau,u}[t]\}_{i\in\mathcal{N},\tau\in\mathcal{T}}$, incoming rounds.}
\KwOut{A routing decision associating each incoming round with a tree $\tau\in\mathcal{T}$.}
\KwSty{Step 1:} Calculate $W_{\tau}=\sum_{i\in\mathcal{N}}\left(Q_i^{\tau,u}[t]\right)$ for all $\tau\in\mathcal{T}$.\\
\KwSty{Step 2:} Find the minimum loaded tree $\tau^*[t]$ as:
\begin{equation*}
\tau^*[t]=\arg\min_{\tau\in\mathcal{T}} W_{\tau}[t].
\end{equation*}
\KwSty{Step 3:} Assign all incoming rounds to aggregation tree $\tau^*[t]$. 
\caption{Greedy tree-loading algorithm for FMux computation.}
\end{algorithm}

The scheduling algorithm is similar to the MaxWeight policy\cite{akrsvw}, in that it picks a maximum independent set with weights given by the product of the rate and the maximum queue across an edge. Formally we have the following algorithm:

\begin{algorithm}[H]
\label{algo:compbpalgo1}
\SetAlgoLined
\KwIn{Time slot $t$, queues $\{\mathcal{Q}_i^{\tau,nu}[t],\mathcal{Q}_i^{\tau,u}[t]\}_{i\in\mathcal{N},\tau\in\mathcal{T}}$, incoming packets $\mathcal{A}_i[t]$, admissible rate region $\Gamma$.}
\KwOut{A scheduling decision $\left\{c_{(i,p^\tau(i))}^\tau[t]\right\}_{i\in\mathcal{N}}$.}
\KwSty{Step 1:} Place packets arriving on tree $\tau$ at node $i$ in $\mathcal{Q}_i^{\tau,nu}[t]$ for non-leaf nodes, and $\mathcal{Q}_i^{\tau,u}[t]$ for leaf nodes.\\
\KwSty{Step 2:} Calculate $P_{ij}[t]=\max_{\tau\in\mathcal{T}:(i,j)\in\tau}Q_i^{\tau,u}[t]$. Also define $\tau^*(i,j)[t]$ as the tree which maximizes $P_{ij}[t]$.\\
\KwSty{Step 3:} Compute schedule $\mathbf{c}^*[t]$ as:
\begin{equation*}
\mathbf{c}^*[t]=\arg\max_{\mathbf{c}\in\Gamma}\sum_{(i,j)\in\mathcal{L}}P_{ij}[t]c_{ij}[t].
\end{equation*}\\
\KwSty{Step 4:} Consider link $(i,j)$. If $c^*_{ij}[t]>0$, then transmit the first $\min(c^*_{ij}[t], Q_i^{\tau^*(i,j)[t],u}[t])$ packets from $Q_i^{\tau^*(i,j)[t],nu}[t]$.
\caption{MaxWeight scheduling algorithm.}
\end{algorithm}

For the sake of completeness, we note that in all the above algorithms, tie-breaking rules as well as the service discipline (i.e., among a set of multiple packets suitable for transmission, which one gets priority) are assumed to be random; this is done for the sake of convenience, and we note that there are many possible tie-breaking rules and service disciplines which would suffice.

Now we can state and prove the throughput optimality of this algorithm. 
\begin{theorem}
\label{thm:treeSSS}
The dynamic queue based policy consisting of greedy tree loading (Algorithm \ref{algo:treechoosealgo}) and MaxWeight scheduling (Algorithm \ref{algo:compbpalgo1}) stabilizes the system for any refresh rate $\lambda$ that is less than the maximum refresh rate $\lambda^*$.
\end{theorem}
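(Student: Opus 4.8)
The plan is to use a Lyapunov-drift argument, but with a Lyapunov function that splits into a "scheduling part" and a "routing (tree-loading) part," and to combine it with an explicit tree-packing constructed from Edmonds' Theorem (Theorem~\ref{thm:edmonds}) to certify that the greedy rule asymptotically realizes the optimal split. First I would fix $\lambda < \lambda^*$, pick an SSS rule $\pi^*$ achieving $\mathbf{\tilde c}^* = \arg\max_{\mathbf{\hat c}\in\mathcal{CH}(\Gamma)}\delta^*(\mathbf{\hat c})$, and invoke Edmonds to obtain nonnegative tree weights $\{\lambda_\tau\}_{\tau\in\mathcal{T}}$ with $\sum_\tau \lambda_\tau = (\log|\mathcal{R}(f)|)\,\lambda^* \cdot(\text{something})$ — more precisely a packing of $\mathcal{T}$ into $\mathbf{\tilde c}^*$ with total weight $\delta^*(\mathbf{\tilde c}^*)/\log|\mathcal{R}(f)| = \lambda^*$; since $\lambda < \lambda^*$ there is slack $\epsilon>0$ so that $(1+\epsilon)\lambda \le \sum_\tau \lambda_\tau$ and one can scale to get per-tree target rates $\{\mu_\tau\}$ with $\sum_\tau \mu_\tau = (1+\epsilon)\lambda$ and the induced per-edge load $\sum_{\tau:(i,j)\in\tau}\mu_\tau$ strictly dominated (coordinatewise, with margin) by some point in $\mathcal{CH}(\Gamma)$, hence expressible via an SSS rule. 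This fixed "reference policy" (split each round onto tree $\tau$ w.p.\ $\mu_\tau/\sum\mu_\tau$, schedule by the reference SSS rule) is the comparison object for the drift bound.

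Next I would write the queue dynamics. By Lemma~\ref{ATisgood} it suffices to analyze Type-AT policies, and under Algorithm~\ref{algo:compbpalgo1} the key structural fact (as in Example~3) is that a transmission on edge $(i,j)$ moves a packet from $\mathcal{Q}_i^{\tau,u}$ toward $p^\tau(i)$ without changing the \emph{total} size of the receiver's queues (FMux property). I would take the Lyapunov function $V[t] = \sum_{i,\tau} (Q_i^{\tau,u}[t])^2 + \beta\big(\max_\tau W_\tau[t]\big)^2$ (or a smoothed version of the max, e.g. $\sum_\tau W_\tau[t]^2$ works too and is cleaner), and compute the one-step drift. The scheduling part of the drift is handled exactly as in the MaxWeight / Backpressure analysis of \cite{akrsvw}: because Algorithm~\ref{algo:compbpalgo1} maximizes $\sum_{(i,j)} P_{ij}[t]\,c_{ij}[t]$ over $\Gamma$, its weighted service dominates that of \emph{any} SSS rule, in particular the reference one; this produces a term $-\Theta(\epsilon)\sum_{(i,j)}P_{ij}[t]$ plus bounded terms, using $\mathbb{E}[A[t]^2]=m_A<\infty$ and $c_{uv}(I)\le c_{\max}$ to control the second-moment/boundary corrections.

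The genuinely new part — and the main obstacle — is the routing (tree-loading) drift: I must show that greedily loading every incoming round onto $\arg\min_\tau W_\tau$ (over \emph{all} spanning trees, not a pre-chosen optimal subset) drives the system toward the good split $\{\mu_\tau\}$. The argument I would use: the greedy rule is exactly a "join-the-shortest-queue"-type / projection rule, and one shows that $\max_\tau W_\tau[t]$ has negative drift whenever it is large, because the arrivals are all routed to the minimizer while every tree's useful-queue is being drained by the (dominant) MaxWeight service at a rate that, summed appropriately, exceeds $\lambda$ thanks to the Edmonds packing slack. Concretely I would argue that if $\max_\tau W_\tau$ is large then either many trees are large (and total service beats total arrival by the packing bound, giving drift on $\sum_\tau W_\tau^2$) or few are, in which case the greedy rule does not add to the large ones at all; a clean way to package this is to compare $\sum_\tau W_\tau[t]^2$ against the reference randomized split and note that greedy minimizes the post-arrival increment $\sum_\tau (W_\tau + A[t]\mathbf{1}\{\tau=\cdot\})^2$ over the choice of which single tree receives the round, hence its drift is at most that of the reference split, which is negative outside a bounded set by the choice of $\{\mu_\tau\}$. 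Combining the two drift bounds gives $\mathbb{E}[V[t+1]-V[t]\mid \text{state}] \le -\eta\big(\sum_{i,\tau}Q_i^{\tau,u}[t] + \sum_\tau W_\tau[t]\big) + K$ for constants $\eta,K>0$ outside a finite set; since $\mathcal{Q}_i^{\tau,nu}$ is dominated by $\mathcal{Q}_{\text{children}}^{\tau,u}$-type quantities under Type-AT (each round carries at most $N$ packets, cf.\ Lemma~\ref{ATisgood}), control of the useful queues controls all queues, and the Foster--Lyapunov criterion yields positive recurrence, i.e.\ stability. The subtle point to get right in the writeup is the interaction term in the drift of $\max_\tau W_\tau$ coming from the coupling between which tree is scheduled (via $P_{ij}$, $\tau^*(i,j)$ in Algorithm~\ref{algo:compbpalgo1}) and which tree receives arrivals (via Algorithm~\ref{algo:treechoosealgo}); I expect to handle this by observing both algorithms only ever help — arrivals go to the min, service is MaxWeight-optimal — so no adversarial cross term survives, but making that rigorous for \emph{all} of $\mathcal{T}$ (exponentially many trees) while keeping constants finite is where the care is needed.
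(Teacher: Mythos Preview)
Your overall architecture is right and matches the paper: quadratic Lyapunov in the useful queues, Edmonds packing to produce a reference split, greedy upper-bounds the arrival term, MaxWeight lower-bounds the departure term, then Foster--Lyapunov. In fact the paper uses simply $V[t]=\sum_{i,\tau}(Q_i^{\tau,u}[t])^2$; your extra $\beta(\max_\tau W_\tau)^2$ (or $\sum_\tau W_\tau^2$) term is not needed and does not help with the real difficulty.

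There is, however, a genuine gap in the routing part. Your plan is to compare both sides to a \emph{single} reference packing $\{\mu_\tau\}$ obtained from Edmonds' Theorem on $\mathbf{\tilde c}^*$. Tracing your drift computation: greedy gives $\sum_{i,\tau}Q_i^{\tau,u}\mathbb{E}[A_i^{\tau,u}\mid\mathbf{Q}]\le \sum_{i,\tau}Q_i^{\tau,u}\cdot\frac{\mu_\tau}{1+\epsilon}$, while MaxWeight gives $\sum_{i,\tau}Q_i^{\tau,u}\mathbb{E}[D\mid\mathbf{Q}]\ge \sum_{i,\tau}Q_i^{\tau,u}\mu_\tau$; subtracting yields a drift bound $\le M-\frac{2\epsilon}{1+\epsilon}\sum_{i,\tau}\mu_\tau\,Q_i^{\tau,u}$. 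The problem is that an Edmonds packing of $\mathbf{\tilde c}^*$ will typically have $\mu_\tau=0$ for all but a handful of trees, so this bound exerts \emph{no} control on $Q_i^{\tau,u}$ for $\tau$ outside the optimal packing. Your join-the-shortest-queue observation (``greedy minimizes the post-arrival increment'') only says the arrival term is no worse than the reference; it does not supply a strictly positive service rate on those trees, and ``both algorithms only ever help'' is not enough to close Foster--Lyapunov, which needs a drift bound $\le M-\eta\sum_{i,\tau}Q_i^{\tau,u}$ with $\eta>0$ uniform in $\tau$. You flag exactly this point (``making that rigorous for all of $\mathcal{T}$ \ldots is where the care is needed''), but you do not supply the mechanism.

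The paper's missing ingredient is a two-packing construction with a \emph{uniform} gap: it first perturbs the optimal SSS rule $\pi^*$ to an SSS rule $\hat\pi$ under which \emph{every} edge has strictly positive capacity while the min-mincut drops by at most $\epsilon/3$; then, from the Edmonds packing $\{\lambda_\tau^*(\hat\pi)\}$ at this perturbed point, it manufactures an ``achievable'' packing $\{\lambda_\tau'\}$ with $\sum_\tau\lambda_\tau'\ge\lambda$ (used to bound arrivals) and a ``near-optimal'' packing $\{\hat\lambda_\tau\}$ that places strictly positive mass on \emph{every} spanning tree (used to bound departures), arranged so that $\hat\lambda_\tau-\lambda_\tau'\ge\epsilon_4>0$ for all $\tau\in\mathcal{T}$. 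Making every edge capacity positive is what allows $\hat\lambda_\tau>0$ on every tree without violating the edge constraints. With this in hand the drift becomes $\le M_3-2\epsilon_4\sum_{i,\tau}Q_i^{\tau,u}$, which is exactly the uniform negative drift you need. This perturbation-plus-two-packings step is the technical heart of the proof and is what your proposal is missing.
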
  

Before proceeding further, we point out a particular novel aspect of the proof of this theorem. Similar to previous papers \cite{akrsvw, TassEph}, we use a quadratic Lyapunov function for showing stability; however our technique for bounding the Lyapunov drift is quite different from those used for point-to-point data. The difficulty arises from the fact that although Edmonds' Theorem guarantees the existence of an optimal tree-packing for the network, the trees in this optimal packing are unknown to the algorithm; consequently it is unclear whether routing over all trees could lead to instability via packet accumulation on trees not involved in the optimal packing. We circumvent this by showing the existence of some intermediate tree packings between the optimal and the desired refresh rates, which allow uniform bounding of the Lyapunov drift. We now present the complete proof.

\begin{proof}
We define a candidate Lyapunov function $V[t]$ as
\begin{align*}
V[t]=\sum_{i\in\mathcal{N}}\sum_{\tau\in\mathcal{T}}Q_i^{\tau,u}[t]^2,
\end{align*}
with corresponding Lyapunov drift given by
\begin{equation*}
\Delta_V[t]=\mathbb{E}[V[t+1]-V[t]|\mathbf{Q}[t]].
\end{equation*}
Similar to before, we have that $V[t]\geq 0$ for all states of the system, and that $\Delta_V[t]<\infty$. We now need to show that given $\delta>0$, there exists $Q_{\max}$ such that if $Q_i^{\tau,u}[t]>Q_{\max}$ for some $i$, then $\Delta_V[t]<-\delta$. Now we have
\begin{equation*}
\Delta_V[t]=\sum_{i\in\mathcal{N}}\sum_{\tau\in\mathcal{T}}\mathbb{E}\left[(\Delta Q_i^{\tau,u}[t])^2+2Q_i^{\tau,u}[t]\Delta Q_i^{\tau,u}[t]|\mathbf{Q}[t]\right],
\end{equation*}
and defining $A_i^{\tau,u}[t]$ to be arrival of useful packets on tree $\tau$ to node $i$, we have 
\begin{equation*}
\Delta Q_i^{\tau}[t]=A_i^{\tau,u}[t]-D_{(i,p^{\tau}(i))}^{\tau}[t],
\end{equation*}
and thus $(\Delta Q_i^{\tau}[t])^2\leq m_A + (Lc_{\max})^2$ (due to external arrivals plus inter-node transmissions). Let $M_2=N|\mathcal{T}|(m_A+(Lc_{\max})^2)$. Then we have
\begin{equation*}
\Delta_V[t]\leq M_2+2\sum_{i\in\mathcal{N}}\sum_{\tau\in\mathcal{T}}Q_i^{\tau,u}[t]\mathbb{E}\left[A_i^{\tau,u}[t]-D_{(i,p^{\tau}(i))}^{\tau}[t]|\mathbf{Q}[t]\right].
\end{equation*}

From the definition of $\lambda^*$, we know that there exists an optimal rate point $\{c_{uv}^*\}_{(u,v)\in\mathcal{L}}\in\mathcal{CH}(\Gamma)$ and the corresponding optimal SSS rule $\pi^*$ that maximizes the min-mincut. Consider now a refresh rate $\lambda$ less than the $\lambda^*$, such that $\lambda^*-\lambda=\epsilon>0$. Note that the algorithm can potentially split the incoming flow $\lambda$ over {\it every spanning tree} of the network, in order to dynamically arrive at the optimal packing. To uniformly bound the Lyapunov drift, we first need to construct two tree packings: an `achievable' packing $\{\lambda_{\tau}'\}$ such that $\sum_{\tau\in\mathcal{T}}\lambda_{\tau}'\geq \lambda$ which serves as a proxy for the flow-splitting, and a `near-optimal' packing $\{\hat{\lambda}_{\tau}\}$ such that $\sum_{\tau\in\mathcal{T}}\hat{\lambda}_{\tau}\geq \lambda^*-\frac{2\epsilon}{3}$ and further which has the property that $\hat{\lambda}_{\tau}-\lambda_{\tau}'\geq\epsilon_4$ {\it uniformly over all spanning trees} (for some $\epsilon_4>0$ which we define below). We do so as follows.

Assume that there exists $c_{\min} > 0$ such that if any edge $(u,v)\in\mathcal{L}$ is scheduled alone (i.e. $I'=(u,v)$), then $c_{uv}(I') \geq c_{\min}$ (this is simply a formal definition of
existence of a link). We can now perturb the optimal SSS rule to get a new
rate point $\{\hat{c}_{uv}\}_{(u,v)\in\mathcal{L}}\in\mathcal{CH}(\Gamma)$ with the following two properties:
\begin{enumerate}
\item Every edge $(u,v)\in\mathcal{L}$ has capacity $\hat{c}_{uv}\geq \epsilon_1>0$.
\item The min-mincut of the network at the rate point $\{\hat{c}_{uv}\}_{(u,v)\in\mathcal{L}}$ is $\geq\lambda^*-\frac{\epsilon}{3}$.
\end{enumerate}
This helps ensure that the `near-optimal' tree packing $\hat{\lambda}_{\tau}$ can have some mass on each edge of the graph.

To construct the perturbed SSS rule $\hat{\pi}$, consider the optimal SSS rule $\{\pi^*(I)\}_{I\in\mathcal{I}}$. We define $\mathcal{I}'=\{I\in\mathcal{I}:\pi^*(I)>0\}$ (i.e., the set of independent sets that have some mass under $\pi^*$) and $\pi_{\min}=\min_{I\in\mathcal{I}'}\{\pi^*(I)\}$ (which is $>0$ as the cardinality of $|\mathcal{I}|$ is finite). Now we reduce each $\pi^*(I), I\in\mathcal{I}'$ by $\epsilon_2=\min\{\frac{\pi_{\min}}{2},\frac{\epsilon}{3|\mathcal{L}|c_{\max}} \}$. This reduces the min-mincut by at most $\frac{\epsilon}{3}$. To see this, note that the capacity of any edge $(u,v)$ reduces from $c_{uv}^*$ to $\hat{c}_{uv}$ where:
\begin{align*}
\hat{c}_{uv}&\geq c_{uv}^*(1-\epsilon_2),\\
&=c_{uv}^*-\min\left\{\frac{c_{uv}^*\pi_{\min}}{2},\frac{c_{uv}^*\epsilon}{3|\mathcal{L}|c_{\max}}\right\},\\
&\geq c_{uv}^*-\frac{c_{uv}^*\epsilon}{3|\mathcal{L}|c_{\max}},\\
&\geq c_{uv}^*-\frac{\epsilon}{3|\mathcal{L}|}.
\end{align*}
Further, the maximum number of edges across the min-mincut is bounded by $L$. Thus the min-mincut of the network at the rate point $\{\hat{c}_{uv}\}_{(u,v)\in\mathcal{L}}$ is $\geq\lambda^*-\frac{\epsilon}{3}$.

Next, suppose $\mathcal{L}'$ is the set of edges with zero flow under $\pi^*$. We now complete the definition of the perturbed SSS rule $\hat{\pi}$ (using the fact that singleton edges are valid independent sets) as follows:
\begin{equation*}
\hat{\pi}(I)=\left\{
\begin{array}{l l}
\pi^*(I)-\epsilon_2 & \quad :I\in\mathcal{I}',\\
\frac{|\mathcal{I}'|\epsilon_2}{|\mathcal{L}'|} & \quad :I=\{(u,v)\}\,\forall\,(u,v)\in\mathcal{L}',\\
0 & \quad :\mbox{otherwise}.
\end{array}
\right.
\end{equation*}
To see that this is a valid SSS rule, note that $1-\sum_{I\in\mathcal{I}'}\hat{\pi}(I)=|\mathcal{I}'|\epsilon_2$, which is the weight we have distributed equally over all links in $\mathcal{L}'$. The rate point under this SSS rule is henceforth denoted as $\{\hat{c}_{uv}\}$. Then for edges in $\mathcal{L}'$ we have $\hat{c}_{uv}\geq \frac{|\mathcal{I}'|\epsilon_2 c_{\min}}{|\mathcal{L}'|}$. Now since there are only $L$ edges, each with positive capacity $\hat{c}_{uv}$, therefore there exists some $\epsilon_1>0$ such that every edge $(u,v)\in\mathcal{L}$ has capacity $\hat{c}_{uv}\geq \epsilon_1$ under SSS rule $\hat{\pi}$. Finally, applying Edmonds' Theorem (Theorem~\ref{thm:edmonds}) on the network under $\hat{\pi}$, we get a packing $\{\lambda^*(\hat{\pi})_{\tau}\}_{\tau\in\mathcal{T}}$ such that we have
\begin{equation*}
\sum_{\tau\in\mathcal{T}}\lambda_{\tau}^*(\hat{\pi})\geq\lambda^*-\frac{\epsilon}{3}.
\end{equation*}
Before proceeding further, we need the following definitions:
\begin{itemize}
\item[$\bullet$] $\mathcal{L}^*\triangleq\{(u,v)\in\mathcal{L}:\hat{c}_{uv}-\sum_{\tau:(u,v)\in\tau}\lambda_{\tau}^*(\hat{\pi})=0\}$.
\item[$\bullet$] $\mathcal{T}^*\triangleq\{\tau\in\mathcal{T}:\lambda_{\tau}^*(\hat{\pi})>0\}$.
\item[$\bullet$] $\epsilon_3\triangleq\min_{(u,v)\in(\mathcal{L}^*)^c}\{\hat{c}_{uv}-\sum_{\tau:(u,v)\in\tau}\lambda_{\tau}^*(\hat{\pi})\}$.
\item[$\bullet$] $\lambda_{\min}\triangleq\min_{\tau\in\mathcal{T}^*}\{\lambda_{\tau}^*(\hat{\pi})\}$.
\end{itemize}
Note that $\epsilon_3>0$ as $\hat{c}_{uv}>\epsilon_1$ and the packing is not tight on the finite set $(\mathcal{L}^*)^c$. Similarly, $\lambda_{\min}>0$.

Finally we can construct the tree packings (on the network under SSS rule $\hat{\pi}$) that we need to bound the Lyapunov drift:
\begin{enumerate}
\item The `achievable' tree packing, $\{\lambda_{\tau}'\}_{\tau\in\mathcal{T}}$ is defined as:
\begin{equation*}
\lambda_{\tau}'=\left\{ 
  \begin{array}{l l}
  \max\left\{\lambda_{\tau}^*(\hat{\pi})-\frac{2\epsilon}{3|\mathcal{T}^*|},0\right\} & \quad :\tau\in\mathcal{T}^*, \\
  0 & \quad :\tau\notin\mathcal{T}^*.
  \end{array} \right. 
\end{equation*}
Then clearly $\lambda_{\tau}'$ is a packing (as we are only removing mass from a valid packing) and further: 
\begin{equation*}
\sum_{\tau\in\mathcal{T}}\lambda_{\tau}'\geq\sum_{\tau\in\mathcal{T}}\lambda_{\tau}^*(\hat{\pi})-\frac{2\epsilon}{3}\geq\lambda^*-\epsilon=\lambda .
\end{equation*}
\item The `near-optimal' tree packing, $\{\hat{\lambda}_{\tau}\}_{\tau\in\mathcal{T}}$ is defined as:
\begin{equation*}
\hat{\lambda}_{\tau}=\left\{ 
  \begin{array}{l l}
    \lambda_{\tau}^*(\hat{\pi})-\frac{\epsilon}{3|\mathcal{T}^*|} & \quad :\tau\in\mathcal{T}^*, \lambda_{\tau}^*(\hat{\pi})>\frac{2\epsilon}{3|\mathcal{T}^*|},\\
    \lambda_{\tau}^*(\hat{\pi})-\frac{\lambda_{\min}}{2} & \quad :\tau\in\mathcal{T}^*, \lambda_{\tau}^*(\hat{\pi})\leq\frac{2\epsilon}{3|\mathcal{T}^*|},\\
    \min\left\{\frac{\lambda_{\min}}{2|(\mathcal{T}^*)^c|},\frac{\epsilon}{3|\mathcal{T}^*||(\mathcal{T}^*)^c|},\frac{\epsilon_3}{|(\mathcal{T}^*)^c|}\right\} & \quad :\tau\notin\mathcal{T}^*.
  \end{array} \right. 
\end{equation*}
First we need to show that this is a valid tree packing. To see this, note that the maximum load added on any edge is bounded by $\min\left\{\frac{\lambda_{\min}}{2},\frac{\epsilon}{3|\mathcal{T}^*|},\epsilon_3\right\}$ (since in the worst case, all the trees in $(\mathcal{T}^*)^c$ can contain some edge). For any edge in $(\mathcal{L}^*)^c$, this is less than the slack ($\geq\epsilon_3$ by definition) that was already present. For an edge in $\mathcal{L}^*$, we know at least one tree in $\mathcal{T}^*$ contained it (as every edge in the graph has positive capacity under the SSS rule $\hat{\pi}$), and hence we subtract a load of at least $\min\left\{\frac{\lambda_{\min}}{2|(\mathcal{T}^*)^c|},\frac{\epsilon}{3|\mathcal{T}^*||(\mathcal{T}^*)^c|}\right\}$, which is again greater than the amount of load we add. Thus $\{\hat{\lambda}_{\tau}\}_{\tau\in\mathcal{T}}$ is a valid packing.

Further we have that $\sum_{\tau\in\mathcal{T}}\hat{\lambda}_{\tau}\geq \sum_{\tau\in\mathcal{T}^*}\hat{\lambda}_{\tau}\geq \lambda^*-\frac{2\epsilon}{3}$.\\
\end{enumerate}
In addition, defining
\begin{equation*}
\epsilon_4=\min\left\{\frac{\epsilon}{3|\mathcal{T}^*|},\frac{\lambda_{\min}}{2},\frac{\lambda_{\min}}{2|(\mathcal{T}^*)^c|},\frac{\epsilon}{3|\mathcal{T}^*||(\mathcal{T}^*)^c|},\frac{\epsilon_3}{|(\mathcal{T}^*)^c|}\right\},
\end{equation*}
we get that $\hat{\lambda}_{\tau}-\lambda_{\tau}'\geq\epsilon_4\,\forall\,\tau\in\mathcal{T}$.

Thus we have  constructed the two tree packings we need. We now return to bounding the Lyapunov drift. From above, we have
\begin{equation*}
\Delta_V[t]\leq M_2+2\sum_{i\in\mathcal{N}}\sum_{\tau\in\mathcal{T}}Q_i^{\tau,u}[t]\mathbb{E}\left[A_i^{\tau,u}[t]-D_{(i,p^{\tau}(i))}^{\tau}[t]|\mathbf{Q}[t]\right].
\end{equation*}
Now, let $c_{(i,p^{\tau}(i))}^{\tau}[t]$ be the rate for packets on aggregation tree $\tau$ on link $(i,p^{\tau}(i))$ allocated by the policy in time slot $t$ (thus $\sum_{\tau\in\mathcal{T}}c_{(i,p^{\tau}(i))}^{\tau}[t]=c_{(i,p^{\tau}(i))}[t]$). Then we have
\begin{align*}
\sum_{i\in\mathcal{N}}\sum_{\tau\in\mathcal{T}}Q_i^{\tau,u}[t]\mathbb{E}&\left[D_{(i,p^{\tau}(i))}^{\tau}[t]|\mathbf{Q}[t]\right]\\
&=\sum_{i\in\mathcal{N}}\sum_{\tau\in\mathcal{T}}Q_i^{\tau,u}[t]\mathbb{E}\left[c_{ip^{\tau}(i)}^{\tau}[t]-\max\{c_{ip^{\tau}(i)}^{\tau}[t]-Q^{\tau,u}_i[t],0\}|\mathbf{Q}[t]\right]\\
&\geq\sum_{i\in\mathcal{N}}\sum_{\tau\in\mathcal{T}}Q_i^{\tau,u}[t]\mathbb{E}\left[c_{ip^{\tau}(i)}^{\tau}[t]-\max\{c_{\max}-Q^{\tau,u}_i[t],0\}|\mathbf{Q}[t]\right]\\
&\geq\sum_{i\in\mathcal{N}}\sum_{\tau\in\mathcal{T}}Q_i^{\tau,u}[t]\mathbb{E}\left[c_{(i,p^{\tau}(i))}^{\tau}[t]\right]-NL|\mathcal{T}|c_{\max}^2.
\end{align*}
Further, from the definition of the policy, we know that
\begin{align*}
\sum_{i\in\mathcal{N}}\sum_{\tau\in\mathcal{T}}Q_i^{\tau,u}[t]\mathbb{E}&\left[c_{(i,p^{\tau}(i))}^{\tau}[t]|\mathbf{Q}[t]\right]\\
&=\mathbb{E}\left[\max_{\mathbf{c}\in\Gamma}\sum_{(i,j)\in\mathcal{L}}\max_{\tau\in\mathcal{T}:(i,j)\in\tau}\left\{Q_i^{\tau,u}[t]\right\}c_{ij}[t]|\mathbf{Q}[t]\right]\\
&\geq\max_{\mathbf{c}\in\Gamma}\sum_{(i,j)\in\mathcal{L}}\max_{\tau\in\mathcal{T}:(i,j)\in\tau}\left\{Q_i^{\tau,u}[t]\right\}\mathbb{E}\left[c_{ij}[t]|\mathbf{Q}[t]\right]\\
&\geq\max_{\mathbf{c}\in\mathcal{CH}(\Gamma)}\sum_{(i,j)\in\mathcal{L}}\max_{\tau\in\mathcal{T}:(i,j)\in\tau}\left\{Q_i^{\tau,u}[t]\right\}\mathbb{E}\left[c_{ij}[t]|\mathbf{Q}[t]\right]\\
&\geq\max_{\mathbf{c}\in\mathcal{CH}(\Gamma)}\sum_{(i,j)\in\mathcal{L}}\sum_{\tau\in\mathcal{T}:(i,j)\in\tau}Q_i^{\tau,u}[t]\mathbb{E}\left[c_{ij}^{\tau}[t]|\mathbf{Q}[t]\right]\\
&\mbox{(where $c_{ij}^{\tau}[t]$ is any tree-packing of a given $\mathbf{c}\in\mathcal{CH}(\Gamma)$)}\\
&\geq\sum_{(i,j)\in\mathcal{L}}\sum_{\tau\in\mathcal{T}:(i,j)\in\tau}Q_i^{\tau,u}[t]\hat{c}_{ij}^{\tau},
\end{align*}
where for any edge $(i,j)$, $\hat{c}_{ij}^{\tau}$ represents any valid split of $\hat{c}_{ij}$ between trees lying on that edge, i.e., $\hat{\mathbf{c}}\in\mathcal{CH}(\Gamma)$. In particular, therefore, we can use the tree packing $\{\hat{\lambda}_{\tau}\}_{\tau\in\mathcal{T}}$ to get
\begin{align*}
\sum_{i\in\mathcal{N}}&\sum_{\tau\in\mathcal{T}}Q_i^{\tau,u}[t]\mathbb{E}\left[c_{(i,p^{\tau}(i))}^{\tau}[t]|\mathbf{Q}[t]\right]
\geq\sum_{i\in\mathcal{N}}\sum_{\tau\in\mathcal{T}}Q_i^{\tau,u}[t]\hat{\lambda}^{\tau}.
\end{align*}
Combining inequalities, and defining
$M_3\triangleq N|\mathcal{T}|\left(m_A+(Lc_{\max})^2+Lc_{\max}^2\right),$
we get
\begin{align*}
\Delta_V[t]\leq M_3+2\sum_{i\in\mathcal{N}}\sum_{\tau\in\mathcal{T}}Q_i^{\tau,u}[t]\left(\mathbb{E}\left[A_i^{\tau}[t]|\mathbf{Q}[t]\right]-\hat{\lambda}^{\tau}\right).
\end{align*}

Finally, define $A_i^{\tau}[t]$ to be the rate of rounds arriving on tree $\tau$. Then from the greedy round-tree assignment algorithm, and using the fact that each round results in exactly one useful packet at each node, we get
\begin{flalign*}
\sum_{i\in\mathcal{N}}\sum_{\tau\in\mathcal{T}}Q_i^{\tau,u}[t]&\mathbb{E}\left[A_i^{\tau,u}[t]|\mathbf{Q}[t]\right]\\
&=\sum_{i\in\mathcal{N}}\sum_{\tau\in\mathcal{T}}Q_i^{\tau,u}[t]\mathbb{E}\left[A_i^{\tau}[t]|\mathbf{Q}[t]\right]\\
&=\mathbb{E}\left[\min_{\{A^{\tau}[t]\}:\sum_{\tau\in\mathcal{T}}A^{\tau}[t]=A[t]}\left\{\sum_{\tau\in\mathcal{T}}\left(\sum_{i\in\mathcal{N}}Q_i^{\tau,u}[t]\right)A_i^{\tau}[t]\right\}\Bigg|\mathbf{Q}[t]\right]\\
&\leq\min_{\{A^{\tau}[t]\}}\sum_{i\in\mathcal{N}}\sum_{\tau\in\mathcal{T}}Q_i^{\tau,u}[t]\mathbb{E}\left[A_i^{\tau}[t]|\mathbf{Q}[t]\right]\\
&\leq\sum_{i\in\mathcal{N}}\sum_{\tau\in\mathcal{T}}Q_i^{\tau,u}[t]\lambda_{\tau}'.
\end{flalign*}
Thus we get
\begin{align*}
\Delta_V[t]&\leq M_3-2\sum_{i\in\mathcal{N}}\sum_{\tau\in\mathcal{T}}Q_i^{\tau,u}[t](\hat{\lambda}_{\tau}-\lambda_{\tau}')\\
&\leq M_3-2\epsilon_4\sum_{i\in\mathcal{N}}\sum_{\tau\in\mathcal{T}}Q_i^{\tau,u}[t].
\end{align*}
In order to have $\Delta_V[t]<-\delta$ if $Q_i^{\tau}[t]\geq Q_i^{\tau,u}[t]>Q_{\max}$ for some $i,\tau$, we can choose $Q_{\max}>\frac{M_3+\delta}{2\epsilon_4}$. Thus $V[t]$ is a valid Lyapunov function and by Foster's Theorem, our policy is stabilizing for any $\lambda<\lambda^*$.
\end{proof}

\section{Discussion and Conclusions}
We have presented a queueing-based framework for in-network function computation. We have used this framework to gain insights into designing dynamic and distributed algorithms for in-network function computation and to quantify the performance gains over data-download. We have focused on a class of functions, the FMux functions, which exhibit maximum compression on aggregation, and for which we have used the parity and MAX functions as representative examples. For such functions we have developed scheduling and routing algorithms under different settings. For wireline networks, we have extended the random routing scheme of Massouli{\'e} et al. \cite{MassTwigg} for aggregation. For wireless networks, we have provided a fixed-routing via dynamic flow splitting along with MaxWeight-like scheduling, which is shown to be throughput-optimal. 

The wireless algorithm, as presented, requires routing on all aggregation trees in order to achieve throughput optimality; this may not be practical in many networks due to the potentially exponential number of trees. However, as we showed in the example with the complete graph, one can obtain optimal tree packings with a much smaller number of trees (of the order of $L$) and one direction of future work is to show how such trees can be selected using simple rules in different networks.

Generalizing these algorithms to deal with a broader class of functions, as well as studying the performance of the algorithms with respect to other metrics (delay, energy consumption, among others) are other topics for future work. 

\section*{Acknowledgments}

This work was supported in part by AFOSR under grant FA9550-09-1-0317, NSF Grants CNS-0519535/0519401 (collaborative grant) and CNS-0964391.


\section*{Appendix A\\Scheduling with Random Packet Forwarding: Detailed Proofs}
\label{app:Algo2}

We now present the complete proof for the throughput-optimality of Algorithm \ref{algo:wlinerpfalgo} {\corr in directed acyclic graphs}. Since the proof closely follows the proof of Massouli{\'e} et al.\cite{MassTwigg}, we do not go into complete details, but try mainly to highlight the modifications we make in order to perform aggregation rather than broadcast.

{\corr First we need a lemma that ensures that under the useful packet transmission rule, each round of packets follows a spanning tree. Recall that the footprint of a round of packets is defined as the set of nodes in which the packets of that round is present. Further, recall that a set $S$ is said to be a valid footprint set if each node in $S$ has a path to $a$ in the subgraph induced by $S$; the collection of such sets is denoted by $\mathcal{S}$. We assume throughout that $\mathcal{N}\in\mathcal{S}$, for otherwise the min-mincut is $0$. Note that since we operate in continuous time, only one packet transmission ocurs at a given time with probability $1$; further, we require that the local state information is available at the time of making routing decision. Now we have the following lemma:

\begin{lemma}
For a round of packets with footprint $S\in\mathcal{S}$, the transmission of a useful packet results in a new footprint $S'$ which is also a valid footprint set.
\end{lemma}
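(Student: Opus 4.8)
The plan is to reduce the claim to a reachability statement about the induced subgraph $G[S\setminus\{u\}]$ and then prove that statement by induction along a topological order of $G[S]$.

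First I would record how the footprint changes. Since the transmitted packet is useful across $(u,v)$, the aggregation condition gives $u,v\in S$; hence $v$ already holds a packet of the round, the transmitted packet merges into it, $v$ stays in the footprint, while $u$'s packet is consumed by the transmission and $u$ drops out. So $S'=S\setminus\{u\}$. (Because $a$ is the aggregator it never transmits, so $u\neq a$ and in particular $a\in S'$.) Thus it suffices to show that every $x\in S\setminus\{u\}$ has a directed path to $a$ inside $G[S\setminus\{u\}]$, which is exactly the assertion that $S'\in\mathcal{S}$.

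Next comes the structural step. As $G[S]$ is a nonempty finite DAG it has a sink; any sink reaches $a$ within $G[S]$ by validity of $S$, and a vertex with no outgoing edge can only reach $a$ by a path of length $0$, so that sink equals $a$. Hence $a$ is the unique sink of $G[S]$, i.e. every $x\in S$ with $x\neq a$ has at least one out-neighbour inside $G[S]$. Now fix a topological order of $G[S]$ and argue by backward induction along it that each $x\in S\setminus\{u\}$ reaches $a$ in $G[S\setminus\{u\}]$. The base case $x=a$ is immediate. For the inductive step take $x\neq a$ in $S\setminus\{u\}$; I claim $x$ has an out-neighbour $y\in S$ with $y\neq u$. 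Indeed $x$ has some out-neighbour in $S$, and if $u$ were the only one then $x\in S\cap N^-(u)$ with $|S\cap N^+(x)|=1$, contradicting the non-isolation condition. The edge $(x,y)$ forces $y$ to be strictly later in the topological order, so the induction hypothesis supplies a path from $y$ to $a$ in $G[S\setminus\{u\}]$; prepending $(x,y)$, which lies in $G[S\setminus\{u\}]$ since $x,y\in S\setminus\{u\}$, yields the required path from $x$.

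The point to get right --- and the only place the hypotheses really bite --- is that the non-isolation condition is imposed for exactly the in-neighbours of $u$ in the footprint, precisely the vertices whose route to $a$ could be severed by deleting $u$; the topological-order induction is what converts this purely local guarantee into the global ``every vertex still reaches $a$'' statement without any circular dependence. I would also be careful to state the footprint update $S'=S\setminus\{u\}$ and the fact that $u\neq a$ explicitly, since the whole argument rests on $a$ remaining in the footprint.
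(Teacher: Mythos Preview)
Your proof is correct and follows essentially the same approach as the paper: both exploit the acyclicity via a topological order and use the non-isolation condition to guarantee that every in-neighbour of the transmitted node $u$ retains an alternative out-neighbour in the footprint. The paper phrases this as a three-case partition (nodes earlier than $u$, nodes later than $u$ with an alternate path, nodes later than $u$ with all paths through $u$---the last shown impossible by contradiction), whereas your backward induction along the topological order of $G[S]$ is a slightly more streamlined packaging of the same idea.
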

\begin{proof}
Since the underlying graph is directed acyclic, we re-label the nodes as $\{0,1,2,\ldots,N-1\}$ according to their topological ordering, where node $0$ is the aggregator $a$, and all edges are from a higher numbered node to a lower numbered node. Further, given a round of packets on a valid footprint set $S$, we have that each node $k\in S$ has at least one route to $a$ using only nodes in $S$; for short, we refer to such a route as a {\it path from $k$ to $a$ in $S$}.

Since we are operating in continuous time, with probability $1$ only one packet transmission occurs at a given time. Now suppose a useful packet is transmitted on edge $(j,i)$, where $i<j$, resulting in a new footprint set $S'=S\setminus\{j\}$. For $S'$ to be a valid footprint, we need that even after the transmission, each node $k\in S'$ has a path to $a$ in $S'$. To do this, we need to consider a partition of the nodes in $S'$ into $3$ classes:
\begin{itemize}
\item  Node $k\in S'$ such that $k<j$ in the topological order: due to the topological ordering property, a path from $k$ to $a$ in $S$ is clearly unaffected by the packet transmission from $j$ to $i$.
\item Node $k\in S', k>j$ such that there exists a path from $k$ to $a$ in $S$ which does not include $j$: such a path is also unaffected by the packet transmission from $j$ to $i$ and hence is still present in $S'$.
\item Node $k\in S', k>j$ such that {\it all paths} from $k$ to $a$ in $S$ pass through $j$: we show by contradiction that this case is impossible under the rules of useful packet transmission. For any path from $k$ to $a$ in $S$, let $k'\leq k$ be the node immediately before $j$ (i.e., the path is $k\rightarrow\ldots\rightarrow k'\rightarrow j\rightarrow\ldots\rightarrow a$). Then $k'$ has no path to $a$ in $S$ that does not pass through $j$, for otherwise we have a path from $k$ to $k'$, and then to $a$, which does not pass through $j$. This means that $k'$ becomes isolated upon transmission of packet from $j$ to $i$, which violates the non-isolation condition of useful packet forwarding.
\end{itemize}
Thus we have that $S'$ is a valid footprint set.
\end{proof}
}
The main idea behind the proof in \cite{MassTwigg} was to define the `footprint counter' variables to represent the state of the system, and considering an appropriate function of these that allowed translating the local decisions of the nodes in terms of global graph parameters. In order to modify the proof for broadcast, we defined a similar collection of counter variables in Section \ref{sec:Algo1}, and now define their associated dynamics as follows.
\begin{itemize}
\item Arrival of new round: $X_{\mathcal{N}}\rightarrow X_{\mathcal{N}}+1$ (This corresponds to adding a packet to the queue with footprint $\mathcal{N}$, as a packet of the new round is simultaneously generated at all the nodes). 
\item Completion of packet transfer: This is only for active packets, i.e., those currently under transmission. For active packet $r\in\mathcal{A}$ with corresponding $(FP_r,E_r)$ and $(u,v)\in E_r$, we have:
\begin{align*}
&FP_r\rightarrow FP_r\setminus\{u\}, E_r\rightarrow E_r\setminus\{(u,v)\},\\
&E_r=\phi\Rightarrow X_{FP_r}=X_{FP_r}+1.
\end{align*}
(The first equation corresponds to removing the edge over which packet transmission was completed, and also updating the footprint of the packet to include the new node. The second updates the list of idle packets in case there is no other instance of this packet being transmitted.).
\item Initiation of a new transfer at an idle link. The new packet is selected uniformly at random among the set of useful packets at the node. If $(u,v)\notin E_r\,\forall\,r\in\mathcal{A}$, then a new packet transfer is formally described as follows:
	\begin{itemize}
	\item Select a useful packet of an idle round with footprint $S\in\mathcal{S}, v\in S, u\notin S$, with probability
	\begin{equation*}
	p_{S}=\frac{X_{S+u}}{X_{+u-v}+X^a_{+u-v}},
	\end{equation*}
	Select a useful packet of an active round $r\in\mathcal{A}$ with $(FP_r,E_r)\in\mathcal{A}$ with probability
	\begin{equation*}
	p_r=\frac{1}{X_{+u-v}+X^a_{+u-v}}.
	\end{equation*}
	\item If idle packet with footprint $S$ is selected: $X_S\rightarrow X_S-1, \mathcal{A}\rightarrow\mathcal{A}\cup\{r\}$, with $(FP_r,E_r)=(S,(u,v))$. If packet of active round $r$ is selected, then $E_r\rightarrow E_r\cup\{(u,v)\}.$
	\end{itemize}
\end{itemize}

We note here that the node itself does not need to know these global counters to perform packet selection; rather, this emerges from the use of the random useful packet forwarding rule. The idea of relating the local packet selection rule to the global counters is crucial in proving the optimality of the algorithm. The local rules for checking whether a packet is useful or not corresponds to selecting packets whose global footprint obeys certain properties; picking a useful packet uniformly at random therefore corresponds to picking a packet from such a useful global footprint with a probability proportional to the corresponding counter variable. 

Observe that in order to determine the flow into a footprint set $S$, we need to consider the collection of sets which include $S$ and have one extra node. We now define the fluid limits of the system. This is similar in spirit to the fluid limit of the system in \cite{MassTwigg}, so we try to use similar notation. The existence of the limit also follows immediately from their convergence results, so we omit it due to lack of space and refer interested readers to \cite{MassTwigg} for technical details. 

\paragraph{The fluid limits of the system:} 
The fluid trajectories $t\rightarrow x_S(t), S\in\mathcal{S}$ corresponding to the system are defined as follows:
\begin{itemize}
	\item[$\bullet$] $\forall\,(u,v)\in \mathcal{L},\,\forall\,S s.t.v\in S,u\notin S, \exists t\rightarrow\phi_{S+u,(u,v)}(t)\,s.t.$
		\begin{align*}
		x_{\mathcal{N}}(t)=&x_{\mathcal{N}}(0)+\lambda t-\sum_{S\in\mathcal{S}:S+u=\mathcal{N} \mbox{ for some } u}\sum_{\mathcal{N}\in S,(u,v)\in \mathcal{L}}\phi_{\mathcal{N},(u,v)}(t)\\
		x_{S}(t)=&x_{S}(0)+\sum_{v\in S}\sum_{u\notin S, (u,v)\in \mathcal{L}}\phi_{S+u,(u,v)}(t)\\
		&-\sum_{S'\in\mathcal{S}:S'+u=S \mbox{ for some } u}\sum_{v\in S':(u,v)\in \mathcal{L}}\phi_{S,(u,v)}(t).
		\end{align*}

	\item[$\bullet$] Work Conservation: At almost every $t, \phi_{S,(u,v)}(t)$ is differentiable and if $x_{+u-v}(t)>0$ (where $x_{+u-v}(t)$ is the fluid trajectory associated with $X_{+u-v}$), then we have
	\begin{align*} 	
	\frac{d\phi_{S+u,(u,v)}(t)}{dt}=c_{uv}\frac{x_{S+u}(t)}{x_{+u-v}(t)}.
	\end{align*} 	

	\item[$\bullet$] $\phi_{S,(u,v)}(t)$ are non-decreasing, Lipschitz continuous, with Lipschitz constant $c_{uv}$, and 				$\sum_{S\in\mathcal{S}:v\in S, u\notin S}\phi_{S+u,(u,v)}(t)$ is $c_{uv}$-Lipschitz.
\end{itemize}

For any $y\in\mathbb{R}^{|\mathcal{S}|}_+, S(y)\triangleq$ set of all fluid trajectories with initial condition $\in C([0,\infty),\mathbb{R}_+^{|\mathcal{S}|})$, and further, we define $\{X_S^N(t)\}_{S\in\mathcal{S}}$ as the state of the MC with initial conditions $(X^N(0),A^N(0))$, $Y_S^N(t)=\frac{X^N_S(z_Nt)}{z_N}$. Now, as in \cite{MassTwigg}, for a sequence of initial conditions $(X^N(0), A^N(0)), N>0$ s.t. for a sequence of positive numbers $(z_N)_{N>0}, \lim_{N\rightarrow\infty}z_N=\infty$ and the limit
\begin{equation*}
\lim_{N\rightarrow\infty}\frac{X^N(0)}{z_N}\triangleq x(0),
\end{equation*}
exists in $\mathbb{R}^{|\mathcal{S}|}_+$, we have that $\forall\,T>0,\epsilon>0$:
\begin{equation*}
\lim_{N\rightarrow\infty}\mathbb{P}[\inf_{f\in S(x(0))}\sup_{t\in[0,T]}||Y^N(t)-f(t)||\geq\epsilon]=0.
\end{equation*}

\paragraph{The fluid Lyapunov function:} Next we define the candidate Lyapunov function that we use to analyze the stability of the system. In \cite{MassTwigg}, the function was defined in terms of queues (or counters) that counted all the packets whose footprint was contained inside a set $S$. The advantage of these queues for studying broadcast was that their rate of increase was controlled by external arrivals to the system, while they were drained due to transfers across the cut defined by the set $S$. 

For the purpose of studying aggregation, we need to identify an equivalent set of queues to reflect the unique dynamics of the system. In particular, we consider for each set $S$ a queue of all rounds whose footprints \textit{are not entirely contained within $S$}. These queues (counters) exhibit similar properties to the ones considered for broadcast in that every incoming round is counted by all these queues (as every node in the network generates a packet), while the drain of these queues is controlled by flow across the cut defined by the set $S$. Formally, we have the following theorem:   
\begin{theorem}
\label{thm:lyap}
Let $\{x_S\}_{S\in\mathcal{S}}$ denote the fluid trajectories. $\forall\,S\in\mathcal{S}$, define:
\begin{equation*}
x_{\nsubseteq S}=\sum_{S'\in\mathcal{S}, S'\nsubseteq S}x_{S'}.
\end{equation*}
Then (given $\lambda, c_{uv}$) $\exists\beta_1, \beta_2,\ldots, \beta_{K-1}>0, \epsilon>0$ such that the Lyapunov function
\begin{equation}
\label{eq:lyapunovfn}
L(\{x_S\}_{S\in\mathcal{S}})\triangleq\max_{S\in\mathcal{S}}\beta_{|S|}x_{\nsubseteq S}
\end{equation}
verifies
\begin{equation}
\label{eq:lyapunovdrift}
L(x(t))\leq\max(0,L(x(0))-\epsilon t).
\end{equation}
\end{theorem}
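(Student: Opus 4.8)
The plan is to carry out the second (fluid--Lyapunov) stage of the three--stage program of Massouli\'{e} et al.\ \cite{MassTwigg}; the fluid--limit convergence results (stage one) and the passage from fluid stability to positive recurrence of the chain (stage three) transfer essentially verbatim, so the work is entirely in establishing \eqref{eq:lyapunovdrift}. Since $L(x)=\max_{S\in\mathcal{S}}\beta_{|S|}x_{\nsubseteq S}$ is a finite maximum of functions that are Lipschitz along fluid trajectories (the constant for $x_{\nsubseteq S}$ being governed by $\lambda$ and the capacity of the cut of $S$), it is absolutely continuous, and at a.e.\ $t$ with $L(x(t))>0$ one has $\frac{d}{dt}L(x(t))=\max\{\beta_{|S|}\dot x_{\nsubseteq S}(t):S\ \text{attains the max at }t\}$. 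It therefore suffices to produce weights $\beta_1,\dots,\beta_{K-1}>0$ (indexed by footprint--set size) and $\epsilon>0$ such that \emph{every} maximizing set $S^*$ satisfies $\beta_{|S^*|}\dot x_{\nsubseteq S^*}(t)\le-\epsilon$; integrating this differential inequality then gives \eqref{eq:lyapunovdrift} (in particular $L$ cannot re--emerge from $0$, since a Lipschitz nonnegative function with $\dot L\le-\epsilon$ wherever positive stays at $0$ once it reaches it).

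Next I would read off the counter drifts from the fluid equations. A new round contributes footprint $\mathcal{N}$, hence increases every $x_{\nsubseteq S}$ with $S\neq\mathcal{N}$ at rate $\lambda$; and mass leaves $x_{\nsubseteq S}$ only when a \emph{boundary round of $S$} is forwarded, i.e.\ a round whose footprint $FP$ has a single node $u$ outside $S$ and whose packet at $u$ is transmitted on a cut edge $(u,v)$ with $v\in S$. This uses the lemma just proved (footprints remain valid under useful transmissions) together with the elementary identity $x_{\nsubseteq S}=x_{\nsubseteq(S+u)}+\sum_{S'\subseteq S}x_{S'+u}$, valid for any $u\notin S$ with an out--neighbour in $S$, which is the aggregation--side counterpart of the bookkeeping in \cite{MassTwigg} (there mass leaves $x_{\subseteq A}$ only when a packet crosses from $A$ to $A^{c}$). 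Writing $D_S(t)$ for the total forwarding rate of boundary rounds of $S$, this gives $\dot x_{\nsubseteq S}(t)=\lambda-D_S(t)$.

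The crux is then to lower bound $D_{S^*}$ at a maximizing set $S^*$ when $x_{\nsubseteq S^*}>0$. I would first establish the combinatorial relation among the counters that plays here the role of the main lemma of \cite{MassTwigg} (which I record as Lemma~\ref{lem:counting}), and combine it with the work--conservation (activity) condition, which forces a cut edge $(u,v)$, $u\notin S^*,v\in S^*$, to be saturated at rate $c_{uv}$ whenever some useful packet crosses it. The two obstructions, exactly as in broadcasting, are that a saturated cut edge may carry a packet whose footprint has two or more nodes outside $S^*$ (so the transmission does not drain $x_{\nsubseteq S^*}$), and that an edge may idle because the non--isolation clause forbids transmitting the boundary packets that are locally present. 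Acyclicity is decisive here: using the topological order one shows that any such ``stuck'' boundary mass of $S^*$ must in fact reside on footprints contained in strictly larger valid sets $S^{**}\supsetneq S^*$, and then choosing the ratios $\beta_{|S^{**}|}/\beta_{|S^{*}|}$ large enough (a finite recursion over set sizes, with $\epsilon$ chosen accordingly) is designed to make the ``stuck'' scenario incompatible with $S^*$ being a maximizer of $\beta_{|S|}x_{\nsubseteq S}$. What survives is the cut--saturation scenario, which yields $D_{S^*}=(\log|\mathcal{R}(f)|)^{-1}\sum_{u\notin S^*,v\in S^*}c_{uv}\ge(\log|\mathcal{R}(f)|)^{-1}\delta^*>\lambda$, hence $\dot x_{\nsubseteq S^*}(t)<0$ with a margin that the $\beta$'s convert into the uniform $\epsilon$.

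The main obstacle is this last combinatorial step: making the ``stuck mass lives on a strictly larger valid footprint set'' dichotomy precise and quantitative while correctly accounting for the non--isolation clause in the definition of a useful packet. This is precisely where the directed--acyclic hypothesis is used (as it already was in the footprint--validity lemma), and it is the aggregation--side analogue of the most delicate lemma of \cite{MassTwigg}; once it is in hand, the remainder is the standard Lyapunov/Foster argument transcribed to the counters $x_{\nsubseteq S}$.
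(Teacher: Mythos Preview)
Your skeleton is right and you have correctly singled out Lemma~\ref{lem:counting} as the engine, but your account of how the lemma is actually deployed diverges from the paper in two places that would matter if you tried to write the argument out.

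First, acyclicity plays no role in the drift estimate. It is used \emph{upstream}, in the footprint--validity lemma, to ensure that useful--packet forwarding keeps footprints inside $\mathcal{S}$; a by--product is that $X_{+u-v}=\sum_{S\in\mathcal{S}:v\in S,u\notin S}X_{S+u}$ already is the count of useful idle packets on $(u,v)$, because in a DAG the non--isolation clause is equivalent to the requirement that both $S$ and $S+u$ lie in $\mathcal{S}$. Consequently your second obstruction---an edge idling because non--isolation blocks locally present boundary packets---is not a separate case in the fluid limit. The work--conservation clause in the fluid model reads ``if $x_{+u-v}(t)>0$ then edge $(u,v)$ runs at full rate $c_{uv}$'', and the fact that $x_{+u-v}>0$ at a maximizer $S^*$ is exactly the contrapositive of condition~1 of Lemma~\ref{lem:counting}; no topological--order argument is needed or used here.

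Second, Lemma~\ref{lem:counting} is not used to manufacture a dichotomy that leaves \emph{exact} cut saturation $D_{S^*}=\sum_{u\notin S^*,v\in S^*}c_{uv}$. What the paper does is take the contrapositives of \emph{both} conditions at the maximizer $S^*$: condition~1 gives $x_{+u-v}\ge(1+\alpha)^{-1}x_{\nsubseteq S^*}>0$ on every cut edge (so work conservation applies), and condition~2 gives $x_{S'+u}\le\alpha\,x_{+u-v}$ for every ``wasteful'' footprint $S'\nsubseteq S^*$ with $v\in S'$, $u\notin S'$. Substituting into
\[
\dot x_{\nsubseteq S^*}
=\lambda-\sum_{u\notin S^*,\,v\in S^*}c_{uv}\Bigl(1-\sum_{\substack{S'\nsubseteq S^*\\ v\in S',\,u\notin S'}}\frac{x_{S'+u}}{x_{+u-v}}\Bigr)
\]
yields $\dot x_{\nsubseteq S^*}\le\lambda-\sum_{u\notin S^*,v\in S^*}c_{uv}+|\mathcal{L}|2^K(\max_{(u,v)}c_{uv})\,\alpha\le\lambda-\delta^*+|\mathcal{L}|2^K(\max c_{uv})\,\alpha$, and one finishes by choosing $\alpha$ small (with $\beta_i=(1+1/\alpha)^{i-1}$ fixed once and for all by the lemma). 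So the saturation is only approximate, the ``waste'' is controlled uniformly by the single parameter $\alpha$, and there is no recursion over set sizes---the weights are set in one shot. (Also, the $(\log|\mathcal{R}(f)|)^{-1}$ factor does not belong in the fluid drift; it enters only when converting the conclusion back to the packet--rate statement of Theorem~\ref{thm:randomopt}.)
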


As in \cite{MassTwigg}, before proving this theorem we first we need a combinatorial lemma. This lemma and its proof parallels a corresponding lemma in \cite{MassTwigg}, with modifications to deal with aggregation and the $x_{\nsubseteq S}$ counter variables we have defined above. 

\begin{lemma}
\label{lem:counting}
Let $\alpha >0$ be fixed (but arbitrary). We define:
\begin{equation*}
\beta_{i}=\left(1+\frac{1}{\alpha}\right)^{i-1}, i=1,\ldots,K.
\end{equation*}
Then $\forall\,\{x_S\}_{S\in \mathcal{S}}\in\mathbb{R}_+^{|\mathcal{S}|}$, the following conditions hold:
\begin{enumerate}
\item $\forall\,S\in\mathcal{S}, v\in S, u\notin S$, we have
\begin{equation*}
x_{+u-v}<(1+\alpha)^{-1}x_{\nsubseteq S}\Rightarrow \beta_{|S|+1}x_{\nsubseteq S+u}>\beta_{|S|}x_{\nsubseteq S}.
\end{equation*}
\item $\forall\,S\in\mathcal{S}$ such that $\forall\, v\in S, u\notin S, x_{+u-v}\geq (1+\alpha)^{-1}x_{\nsubseteq S}$, if $\exists v\in S, u\notin S$ and some $S'\nsubseteq S, v\in S',u\notin S'$ such that $x_{S'+u}>\alpha x_{+u-v}$, then
\begin{equation*}
\beta_{|S\cup S'|}x_{\nsubseteq S\cup S'}>\beta_{|S|}x_{\nsubseteq S}.
\end{equation*}
\end{enumerate}
\end{lemma}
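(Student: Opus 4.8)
My plan is to prove the two implications of Lemma~\ref{lem:counting} separately, in both cases exploiting the telescoping structure of the geometric weights $\beta_i = (1+1/\alpha)^{i-1}$ and a decomposition of the counter $x_{\nsubseteq S}$ according to how footprints intersect the relevant cut. The key algebraic identity I would set up first is a splitting of $x_{\nsubseteq S}$: for a set $S$ with $v \in S$, $u \notin S$, every round not contained in $S$ either (a) is not contained in $S+u$ either, or (b) is contained in $S+u$ but not in $S$, which forces it to contain $u$ and have its footprint meet $S$ (through $v$ or other boundary nodes). Thus $x_{\nsubseteq S} = x_{\nsubseteq S+u} + (\text{mass of footprints} \subseteq S+u \text{ but } \not\subseteq S)$, and the second term is controlled by $x_{+u-v}$-type quantities (the flow across the $(u,v)$ part of the cut). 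This bookkeeping is the crux; once it is clean, both parts follow by arithmetic.

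\textbf{Part 1.} Assume $x_{+u-v} < (1+\alpha)^{-1} x_{\nsubseteq S}$. From the splitting identity, the extra mass counted by $x_{\nsubseteq S}$ but not by $x_{\nsubseteq S+u}$ consists of rounds whose footprint contains $u$ and lies in $S+u$ but not $S$; I would argue this mass is at most $x_{+u-v}$ summed appropriately, hence (crucially) bounded by $x_{+u-v}$ in the relevant sense, so that $x_{\nsubseteq S} \le x_{\nsubseteq S+u} + x_{+u-v} < x_{\nsubseteq S+u} + (1+\alpha)^{-1} x_{\nsubseteq S}$, giving $x_{\nsubseteq S+u} > (1 - (1+\alpha)^{-1}) x_{\nsubseteq S} = \frac{\alpha}{\alpha+1} x_{\nsubseteq S}$. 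Since $|S+u| = |S|+1$, we have $\beta_{|S|+1} = (1+1/\alpha)\beta_{|S|} = \frac{\alpha+1}{\alpha}\beta_{|S|}$, and multiplying through yields $\beta_{|S|+1} x_{\nsubseteq S+u} > \beta_{|S|} x_{\nsubseteq S}$, as desired. The one subtlety is making sure the ``extra mass'' is bounded by exactly the right boundary counter and not something larger; I would be careful to match the definition $X_{+u-v} = \sum_{S'' : v \in S'', u \notin S''} X_{S''+u}$.

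\textbf{Part 2.} Now $S$ is a cut where \emph{every} boundary flow $x_{+u-v}$ is large ($\ge (1+\alpha)^{-1} x_{\nsubseteq S}$), and there is a witness round with footprint $S' \not\subseteq S$, $v \in S'$, $u \notin S'$ and $x_{S'+u} > \alpha x_{+u-v}$. Consider the union $S \cup S'$. I would decompose $x_{\nsubseteq S}$ into rounds $\not\subseteq S\cup S'$ (which is $x_{\nsubseteq S\cup S'}$) plus rounds $\subseteq S\cup S'$ but $\not\subseteq S$; the latter includes the witness mass $x_{S'+u}$ and more generally is related to the flow across the portion of $S$'s boundary ``covered'' by $S'$. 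The size increase $|S\cup S'| - |S|$ is at least $1$, so $\beta_{|S\cup S'|} \ge (1+1/\alpha)\beta_{|S|}$; it then suffices to show $x_{\nsubseteq S\cup S'} > \frac{\alpha}{\alpha+1} x_{\nsubseteq S}$, equivalently that the ``removed'' mass is less than $(1+\alpha)^{-1} x_{\nsubseteq S}$. But the removed mass is bounded by the boundary flow quantities across the edges from $S'\setminus S$ into $S$, which by the hypothesis $x_{S'+u} > \alpha x_{+u-v}$ combined with $x_{+u-v} \ge (1+\alpha)^{-1}x_{\nsubseteq S}$ forces $x_{\nsubseteq S}$ itself to have a large piece outside $S\cup S'$ --- I would turn the inequality $x_{S'+u} > \alpha x_{+u-v} \ge \frac{\alpha}{\alpha+1} x_{\nsubseteq S}$ around to lower-bound $x_{\nsubseteq S\cup S'}$, since the witness mass $x_{S'+u}$ is itself counted in $x_{\nsubseteq S}$ but, being $\subseteq S\cup S'$, a corresponding amount must be accounted for carefully against the other boundary pieces.

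\textbf{Main obstacle.} The real difficulty, as in the parallel lemma of Massouli\'e et al.~\cite{MassTwigg}, is Part 2: correctly identifying which footprint-counter masses are gained and lost when passing from the cut $S$ to the cut $S\cup S'$, and showing the net effect dominates the geometric factor $(1+1/\alpha)$. Getting the direction of every inclusion right (footprints $\not\subseteq S$ vs.\ $\not\subseteq S\cup S'$, and how adding nodes to the cut can only \emph{shrink} the "not contained" collection) and then pinning the lost mass to the single boundary-flow quantity $x_{+u-v}$ appearing in the hypothesis --- rather than to a sum over all boundary edges --- is where the argument must be done with care. Once the combinatorial accounting in the splitting identity is established, the rest is the same telescoping arithmetic used in Part 1.
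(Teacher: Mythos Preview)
Your Part~1 is exactly the paper's argument: the decomposition $x_{\nsubseteq S}\le x_{\nsubseteq S+u}+x_{+u-v}$, the hypothesis, and the ratio $\beta_{|S|+1}/\beta_{|S|}=(1+\alpha)/\alpha$ give the conclusion in the way you describe.

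Your Part~2, however, contains a sign error in the key inclusion, and this is why you find it the ``main obstacle'' when in fact it is the easier half. You write that the witness mass $x_{S'+u}$ is ``$\subseteq S\cup S'$'' and therefore gets removed when you pass from $x_{\nsubseteq S}$ to $x_{\nsubseteq S\cup S'}$. But the hypotheses give $u\notin S$ and $u\notin S'$, so $u\notin S\cup S'$, and hence $S'+u\nsubseteq S\cup S'$. In other words, the single term $x_{S'+u}$ is \emph{retained} in the sum defining $x_{\nsubseteq S\cup S'}$, not lost. This immediately gives the one-line lower bound
\[
x_{\nsubseteq S\cup S'}\;\ge\;x_{S'+u}\;>\;\alpha\,x_{+u-v}\;\ge\;\frac{\alpha}{1+\alpha}\,x_{\nsubseteq S},
\]
chaining the two hypotheses. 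Since $S'\nsubseteq S$ forces $|S\cup S'|\ge |S|+1$ and the $\beta_i$ are increasing with ratio $(1+\alpha)/\alpha$, multiplying through by $\beta_{|S\cup S'|}$ finishes Part~2 with no decomposition or ``removed mass'' bookkeeping at all. This is precisely how the paper does it. The elaborate accounting you sketch --- bounding the mass lost across the portion of $S$'s boundary covered by $S'$ --- is unnecessary and, as your own hedging language suggests, would be hard to close because the lost mass is not naturally controlled by the single quantity $x_{+u-v}$.
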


Note that Lemma \ref{lem:counting} does not depend on the algorithm, or the fluid model in any way. It is a pure combinatorial property of the way that the quantities are defined. In other words, any function mapping the sets $S\in\mathcal{S}$ to $\mathbb{R}_+$ obeys the lemma \textit{for any $\alpha>0$}. Later we use the ability to control $\alpha$ to obtain uniform bounds on the Lyapunov drift.

\begin{proof}
For the first condition, consider $S\in\mathcal{S}, v\in S, u\notin S$ such that
\begin{equation*}
x_{+u-v}<(1+\alpha)^{-1}x_{\nsubseteq S}.
\end{equation*}
Then we have
\begin{align*}
x_{\nsubseteq S}&=x_{\nsubseteq S+u}+x_{S+u},\\
&\leq x_{\nsubseteq S+u}+x_{+u-v},\\
&< x_{\nsubseteq S+u}+(1+\alpha)^{-1}x_{\nsubseteq S},
\end{align*}
and thus
\begin{equation*}
x_{\nsubseteq S}<\frac{1+\alpha}{\alpha}x_{\nsubseteq S+u}.
\end{equation*}
However, from the definition of the $\beta_i$, we have that $\beta_{i}\frac{1+\alpha}{\alpha}=\beta_{i+1}$ for all $i=1,2,\ldots,N-1$. Hence we have that
\begin{equation*}
\beta_{|S|+1}x_{\nsubseteq S+u}>\beta_{|S|}x_{\nsubseteq S}.
\end{equation*}

For the second condition, consider $S\in\mathcal{S}$ such that$\forall\,v\in S, u\notin S, x_{+u-v}\geq (1+\alpha)^{-1}x_{\nsubseteq S}$. Further, consider set $S'$ such that $S'\nsubseteq S, v\in S', u\notin S'$ and satisfying
\begin{equation*}
x_{S'+u}>\alpha x_{+u-v}.
\end{equation*}
Then we have
\begin{align*}
\beta_{|S\cup S'|}x_{\nsubseteq S\cup S'}
&\geq\beta_{|S\cup S'|}x_{S'+u},\\
&\geq\beta_{|S\cup S'|}\alpha x_{+u-v}),\\
&\geq\beta_{|S\cup S'|}\alpha(1+\alpha)^{-1}x_{\nsubseteq S}.
\end{align*}
Thus for our condition, we need
\begin{equation*}
\beta_{|S\cup S'|}\alpha(1+\alpha)^{-1}\geq\beta_{|S|},
\end{equation*}
and noting the fact that $\beta_i$ are increasing with $i$, it is sufficient to ensure
\begin{equation*}
\frac{\beta_{i+1}}{\beta_i}\geq\frac{1+\alpha}{\alpha},\,\forall\,i=1,2,\ldots ,K-1.
\end{equation*}
This in fact holds with equality because of our choice of $\beta_i$. Thus, given any $\alpha>0$, we can construct $\beta_i$ such that the two conditions hold. 
\end{proof}

Now we use Lemma \ref{lem:counting} to prove Theorem \ref{thm:lyap}. The steps of this proof closely follow the corresponding proof in \cite{MassTwigg}.

\begin{proof}
(Proof of Theorem \ref{thm:lyap}) 
Given $\alpha>0$, we define $\beta_i$ as in Lemma \ref{lem:counting}. Then, or any $y\in\mathbb{R}_+^{|\mathcal{S}|}$, if $S^*$ is a set which belongs to arg-max of $\max_{S\in\mathcal{S}}\beta_{|S|}x_{\nsubseteq S}$, then $x_{S^*_+}>0$ (unless all the fluid sample paths are identically $0$).

Next we use the optimality of $S^*$ to obtain some relations between $x_{S^*_+}$ and the weight across its cut-edges. $\forall\,v\in S^*, u\notin S^*$ such that $(u,v)\in \mathcal{L}$, we have from the contrapositive of the first condition of Lemma \ref{lem:counting} (as $S^*$ is in the arg-max) that
\begin{equation*}
x_{+u-v}\geq(1+\alpha)^{-1}x_{\nsubseteq S^*}.
\end{equation*}
Similarly from condition $2$, $\forall\,v\in S^*, u\notin S^*, S'\nsubseteq S^*$ such that $v\in S', u\notin S'$, we have that
\begin{equation*}
x_{S'+u}\leq\alpha x_{+u-v}.
\end{equation*}
Now we have
\begin{align*}
\frac{d}{dt}x_{\nsubseteq S^*}=&\sum_{\nsubseteq S^*}\frac{d}{dt}x_S\\
=&\lambda-\sum_{v\in S^*, u\notin S^*}\sum_{S\subseteq S^*: v\in S}\frac{d}{dt}\phi_{S+u,(u,v)}\\
=&\lambda-\sum_{v\in S^*, u\notin S^*}\left[c_{uv}-\sum_{S'\nsubseteq S^*, v\in S', u\notin S'}\frac{d}{dt}\phi_{S'+u,(u,v)}\right]\\
=&\lambda-\sum_{v\in S^*, u\notin S^*}c_{uv}\left[1-\sum_{S'\nsubseteq S^*, v\in S', u\notin S'}\frac{x_{S'+u}}{x_{+u-v}}\right],\\
&\mbox{(From defn of fluid trajectories)}\\
\leq &\lambda-\sum_{u\in S^*, v\notin S^*}c_{uv}+\sum_{u\in S^*, v\notin S^*}c_{uv}\sum_{S'\nsubseteq S^*, v\in S', u\notin S'}\alpha\\
&\mbox{(From previous observation)}\\
\leq &\lambda-\sum_{v\in S^*, u\notin S^*}c_{uv}+\max_{(u,v)\in \mathcal{L}}c_{uv}|\mathcal{L}|2^K\alpha.
\end{align*}
If we choose $\alpha$ and $\epsilon$ as follows:
\begin{align*}
\alpha=\frac{1}{2}\frac{\delta^*-\lambda}{|\mathcal{L}|2^K\max_{(u,v)\in \mathcal{L}}c_{uv}},\,\ \epsilon=\frac{1}{2}\left(\delta^*-\lambda\right),
\end{align*}
then we get that, for all $S^*\in\arg\max_{S\in\mathcal{S}}\beta_{|S|}x_{\nsubseteq S}$,
\begin{equation*}
\frac{d}{dt}x_{\nsubseteq S^*}\leq-\epsilon\mathds{1}_{y(t)\neq 0}. 
\end{equation*}
To argue that this implies negative drift of the Lyapunov function, i.e. $L(x(t))=\max_{S\in\mathcal{S}}\beta_{|S|}x_{\nsubseteq S}\leq\max(0,L(x(0))-\epsilon t)$, we observe that by definition $\beta_{|S|}\geq 1\,\forall S\in\mathcal{S}$. Finally, using the Lipschitz continuity of the trajectories, it is sufficient to show this property holds for the sets $S^*\in\arg\max_{S\in\mathcal{S}}\beta_{|S|}x_{\nsubseteq S}$.  
\end{proof}

Finally we can prove Theorem \ref{thm:randomopt} using the stability of the fluid limit process along with standard techniques from literature; for technical details, see \cite{MassTwigg}.

\end{document}